\renewcommand*\env@matrix[1][\arraystretch]{%
  \edef\arraystretch{#1}%
  \hskip -\arraycolsep
  \let\@ifnextchar\new@ifnextchar
  \array{*\c@MaxMatrixCols c}}
\tikzset{vertex/.style={minimum size=2mm,circle,fill=black,draw,inner sep=0pt},
         decoration={markings,mark=at position .5 with {\arrow[black,thick]{stealth}}}}
\pgfplotsset{compat=1.10}
\newcommand{\R}{ \mathbb{R} }
\newcommand{\N}{ \mathbb{N} }
\newcommand{\C}{ \mathbb{C} }
\newtheorem{remark}[theorem]{Remark}
\definecolor{myred}{rgb}{0.86,0.14,0.14}
\definecolor{mygreen}{rgb}{.3,.9,.4}
\definecolor{darkmagenta}{rgb}{0.55, 0.0, 0.55}
\definecolor{julien}{rgb}{1, 0, 0}
\definecolor{lucas}{rgb}{1, 0, 1}
\title{{Parallel approximation of the exponential of
Hermitian matrices}} 
\author{
Frédéric Hecht\thanks{Laboratoire Jacques-Louis Lions, Sorbonne Universit\'e, CNRS,  75005 Paris and INRIA  Paris,  ALPINES  Project-Team,  75589  Paris  Cedex  12,  France(\email{frederic.hecht@sorbonne-universite.fr})}.
\and Sidi-Mahmoud Kaber\thanks{Laboratoire Jacques-Louis Lions, Sorbonne Universit\'e, CNRS,  75005 Paris, France (\email{sidi-mahmoud.kaber@sorbonne-universite.fr})}.
\and Lucas Perrin\thanks{INRIA  Paris,  ANGE  Project-Team,  75589  Paris  Cedex  12,  France
   and Sorbonne Universit\'e, CNRS, Laboratoire Jacques-Louis Lions, 75005 Paris, France (\email{lucas.perrin@inria.fr})}.
\and Alain Plagne\thanks{Centre de Mathématiques Laurent Schwartz, \'Ecole polytechnique, F-91128 Palaiseau, France. (\email{alain.plagne@polytechnique.edu})}.
\and Julien Salomon\thanks{INRIA  Paris,  ANGE  Project-Team,  75589  Paris  Cedex  12,  France
   and Laboratoire Jacques-Louis Lions, Sorbonne Universit\'e, CNRS, 75005 Paris, France (\email{julien.salomon@inria.fr})}.
}
\begin{document}

\maketitle

\begin{abstract}
In this work, we consider a rational approximation of the exponential function to design an algorithm for computing matrix exponential in the Hermitian case. Using partial fraction decomposition, we obtain  a parallelizable method, where the computation reduces to independent resolutions of linear systems. We analyze the effects of rounding errors on the accuracy of our algorithm. We complete this work with numerical tests showing the efficiency of our method and a comparison of its performances with Krylov algorithms.
\end{abstract}

\begin{keywords}
{Matrix exponential, Parallel computing, Truncation error, Taylor series, Partial fraction decomposition, Pad\'e approximation, MATLAB, Octave,  \texttt{expm}, Roundoff error.}

\end{keywords}

\begin{AMS}
15A16, 65F60, 65L99, 65Y05.
\end{AMS}

\section{Introduction}
%

Given a square matrix $A$, the differential equation $u'(t)=Au(t)$ appears in many models, either directly or  
 as an elementary component of more complicated differential systems. 
To solve this equation with a good accuracy, it is useful to have an algorithm computing matrix exponential.
This algorithm must 
be efficient, both for the accuracy and for the computational efficiency. Such an algorithm is presented in this paper.

Many algorithms for computing the exponential 
of a matrix are available. We refer to the celebrated review by Moler and Van Loan~\cite{comp} for a comparison of these methods. 
None of them is clearly more efficient than the others if we take into account various important criteria such as accuracy, computing time, memory space requirements, complexity, properties of the matrices under consideration, etc. 

As is the case with our method, several algorithms are based on rational approximation of the exponential function $e^z$ ($z\in{\mathbb C}$), such as Pad\'e or uniform Chebyshev approximations.
 Let $R_{m,n}(z)$ denotes such an approximation  ($m$ and $n$ are the  degrees of the numerator $N_{m,n}$ and denominator $D_{m,n}$ respectively), the considered approximation of $\exp(A)$ is given by $R_{m,n}(A)=[D_{m,n}(A)]^{-1}N_{m,n}(A)$. 

In the literature, such approximations are combined with scaling or reduction techniques, which mainly consider the so called \emph{diagonal case}, i.e., $m=n$. In~\cite{high}, the authors 
use the scaling and squaring method~\cite{highS} to compute $\exp(A) \simeq [R_{n,n}(A/\ell)]^\ell$ where  $R_{n,n}(z)$ is accurate enough near the origin to guarantee high order approximation of $\exp(A/\ell)$ with $\ell\in{\mathbb N}$. This strategy avoids the conditioning problem of $D_{n,n}$ that can occur when $n$ is large~\cite{comp}.
Rational approximation can also be applied to, e.g.,  reduced or 
simpler forms of $A$, as in~\cite{lu} where an orthogonal  
factorization $A=Q^TTQ$ is used to compute $\exp(A)\simeq Q^TR_{n,n}(T)Q$ with $T$ a tridiagonal matrix.

In the case $n=0$, properties of orthogonal polynomials have been used to define approximations of the matrix exponential, see~\cite{MR1751993} for the Chebyshev case (orthogonality on a bounded interval) and~\cite{MR2777241} for the Laguerre case (orthogonality on the half real line). The interest, in both cases, relies on saving in storage of Ritz vectors during the Krylov iterations.


Our algorithm is based on an independent approach which aims at decomposing the computation in view of parallelization. As a consequence, it can be combined with all the previous algorithms which require the computation of the exponential of a transformed matrix. 
It shares some features  with the one presented in~\cite{saadP} and~\cite{saad}, where diagonal approximations are used to define implicit numerical schemes for linear parabolic equations. The parallelization is obtained using partial fraction decomposition, which is also the case in our method, as explained below. 
A similar approach is considered in~\cite{Grimm}, where the author focus on matrices arising from stiff systems of ordinary differential equations, associated with matrices whose numerical range is negative. 
A specific rational approximation where the poles are constrained to be equidistant in a part of the complex plane is presented. These approximations are related to the functions\footnote{These functions are defined by $\varphi_\ell(x):=\sum_{k=0}\frac{x^{k}}{(k+\ell)!}$.} $\varphi_\ell$ ($\ell>0$) which do not include the exponential (which corresponds to $\varphi_0$). Moreover, the proposed approximation appears to be more efficient for large $\ell$, whereas our method is designed and efficient in the case $\ell=0$.
Various approximations of some matrix functions (including the exponential) based on rational Krylov methods with fixed denominator are presented in~\cite{guttel}. A posteriori bounds and stopping criterion in a similar framework are given in~\cite{simoncini}.

Note however that these references mainly focus on the reduction obtained by Krylov approaches and can be combined with our method. The convergence properties of Krylov methods related to matrix functions is widely documented in the literature. Among the more recent papers on this topic, we refer to~\cite{Beckermann,Diele,Druskin,Knizhnerman,Lopez}.


In the present work, we focus on the parallelization strategy associated with the rational approximation 
defined by
$R_{0,n}(z)=1/ \exp_n (-z)$ (which we simply denote by ${\mathcal R}_n(z)$ hereafter), 
where $\exp_n$ denotes the truncated Taylor series of order
$n$ associated with the exponential, i.e., $m=0$.
The poles  of ${\mathcal R}_n(z)$ 
are 
all
distinct (and well documented) allowing a partial fraction decomposition with affine
denominators.
All terms in the  decomposition are independent
hence their computation can be achieved efficiently in parallel.  

To see how these results can be used to compute matrix exponential, consider ${\mathcal E}_n$   
an approximation  of the complex exponential function depending on one parameter $n\in\N^\ast$.
This approximation naturally extends to the exponential of  
diagonal matrices by setting   ${\mathcal E}_n(\textrm{diag} (d_i)_i):=(\textrm{diag} ({\mathcal E}_n(d_i))_i$, thus to any diagonalizable matrix $A=PDP^{-1}$ by  ${\mathcal E}_n(A):=P{\mathcal E}_n(D)P^{-1}$. The latter definition is actually a property of usual matrix functions, see~\cite[Chap. 1]{higham_book}.
The approximation error for a diagonalizable matrix $A=PDP^{-1}$, with eigenvalues located in a domain of the complex plane $\Lambda$ can then be estimated as follows 
\begin{equation}
\label{equ:con}
 \|\exp(A)-{\mathcal E}_n(A)\|_2\leq \varepsilon_n \,\kappa_2(P)\,
\end{equation} 
where $\kappa_2(P)=\|P\|_2\, \|P^{-1}\|_2$ is the condition number of $P$ in the matrix norm associated with the usual Euclidean vector norm $\|\cdot\|_2$
and $\varepsilon_n:=\max_{z\in \Lambda}\,|{\mathcal E}_n(z)-\exp(z)|$. 
The approximation of $\exp(A)$ is then reduced to the approximation of the exponential on the complex plane. 
If we further assume $A$ to be Hermitian (or, more generally, normal), then $P$ is a unitary matrix and $\kappa_2(P)=1$. 
This is the case for Hermitian matrices that arise, e.g. from the space discretization of the Laplace operator, and more generally for normal matrices.  Note, however, that for an arbitrary matrix, the term $\kappa_2(P)$ may be too large and significantly deteriorate the estimate~\eqref{equ:con}. 


The paper is organized as follows. Section~\ref{sec:scal} is devoted to the approximation of the scalar exponential function.  As explained above, this approximation, denoted by ${\mathcal R}_n(z)$ is in our approach a rational function whose poles are all simple.  In Section~\ref{sec:mat}, we present the approximation of the exponential of a matrix. 
In practise, the partial fraction decomposition of ${\mathcal R}_n(z)$ 
raises some specific numerical issues related to floating-point arithmetic ; these are discussed in Section~\ref{Sec:Floating}.  
We finally demonstrate the efficiency of our method on some examples in Section~\ref{Sec:Numres}. 

\section{The scalar case}~\label{sec:scal}
For $n\in{\mathbb N}^\ast$, let us define $ \exp_n(z):=\sum_{k=0}^n\frac{1}{k!}z^k$, i.e., the exponential Taylor series truncated  at order $n$. 
It is readily seen that for all $x\in\R$ and even values of $n$, $\exp_{n}(x)$ 
is positive. 
Since $\exp_n'=\exp_{n-1}$, it follows that $\exp_n$ is strictly increasing for $n$ odd and strictly convex for $n$ even. 

\subsection{Roots of the truncated exponential series}

We denote by $(\theta^{(n)}_k)_{k=1,\cdots,n}$ the roots of the polynomial $\exp_n$.  
If $n$ is even, the roots are pairs of conjugate complex numbers and none of them is a real number. If $n$ is odd, there is one and only one real root of $\exp_n$ and the others are pairwise conjugate. Some roots of $\exp_n$ are represented on the figure \ref{fig:2m} (left panel). We see that the norm of the roots increases with $n$, which intuitively follows from the fact that the exponential function has no roots  on the whole complex plane.
However, this growth is moderate since (see \cite{marche}, for example) 
\begin{equation}
\label{equ:racn}
1\leq |\theta_k^{(n)}| \leq n. 
\end{equation}
\begin{figure}[ht]
\begin{center}
\includegraphics[width=6cm]{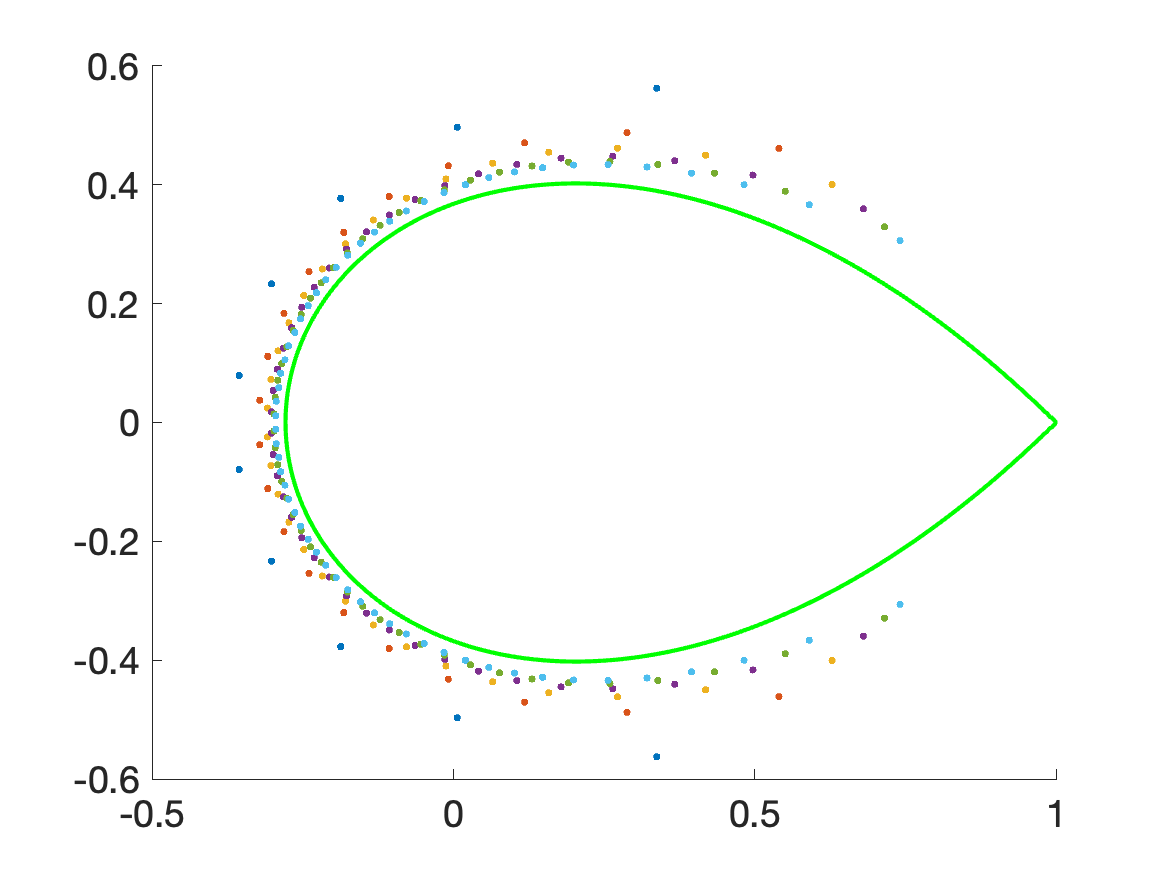}
\includegraphics[width=6cm]{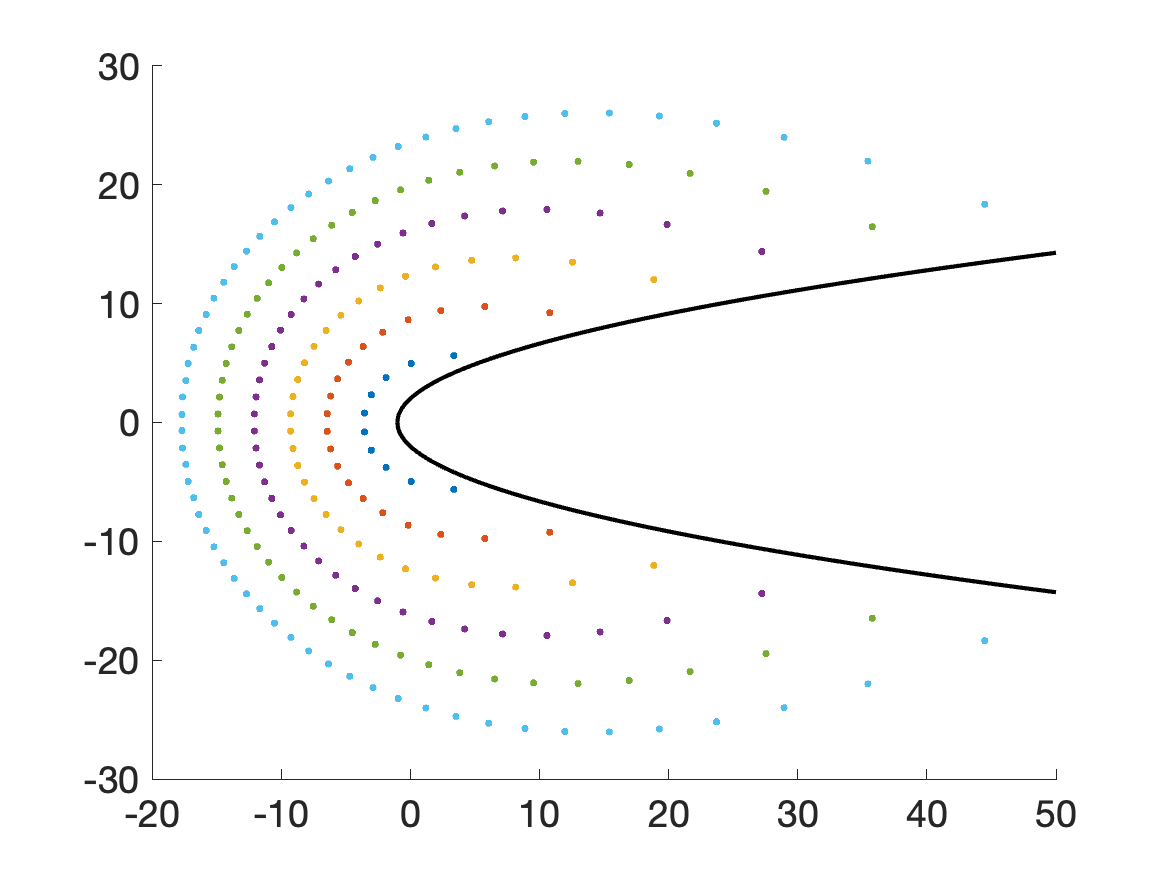}
\end{center}
\caption{Left: the roots of $z\longmapsto \exp_n(z)$, $n=10, 20, 30, 40, 50, 60$. The parabola $y^2=4(x+1)$ delimits an area containing no roots. Right: the Szeg\H{o} curve. }
\label{fig:2m}
\end{figure} 
G. Szeg\H{o} has shown in \cite{szego} that the \emph{normalized roots}, i.e., 
the roots of $\exp_n(nz)$,
approach, when $n\to\infty$, the so-called Szeg\H{o} curve, defined by 
\[\{z \in\C,\quad |ze^{1-z}| =1, \quad |z|\leq 1\}. \]
Some normalized roots and the Szeg\H{o} curve are presented in 
Figure \ref{fig:2m} (left panel). 
In view of~\eqref{equ:con}, it is interesting to determine regions of the complex plane which do not contain any roots. An example is given by the interior of the parabola of equation $y^2=4(x+1)$, which thus includes the positive real half-axis. This surprising result has been obtained by Saff and Varga in \cite{saff-varga}, see Figure \ref{fig:2m} (right panel).

\subsection{Approximation of the exponential}
We propose the following approximation of the exponential function defined for any complex number $z$ by  
\[\exp(z) \simeq {\mathcal R}_n(z) :=\frac{1}{\exp_n(-z)},\] 
which reflects the  identity $\exp(z)=\frac{1}{\exp(-z)}$.
Note that ${\mathcal R}_n(0)=1$ and that ${\mathcal R}_n$ has no real root if $n$ is even, which we will always assume in the rest of this paper.
 
This approach opens the way to a good approximation of the exponential on the half axis $(-\infty,0]$. 
We present on Figure~\ref{fig:4aaa} a graph of the exponential function, 
 its polynomial approximation $\exp_n$, and the rational approximation ${\mathcal R}_n$, on the interval $[-5,0]$.   Though the two approximations seem to fit well with the exponential function for $n=20$, we observe that for $n=10$, the rational approximation is clearly more accurate.  
\begin{figure}[ht]
\begin{center}
\includegraphics[width=6cm]{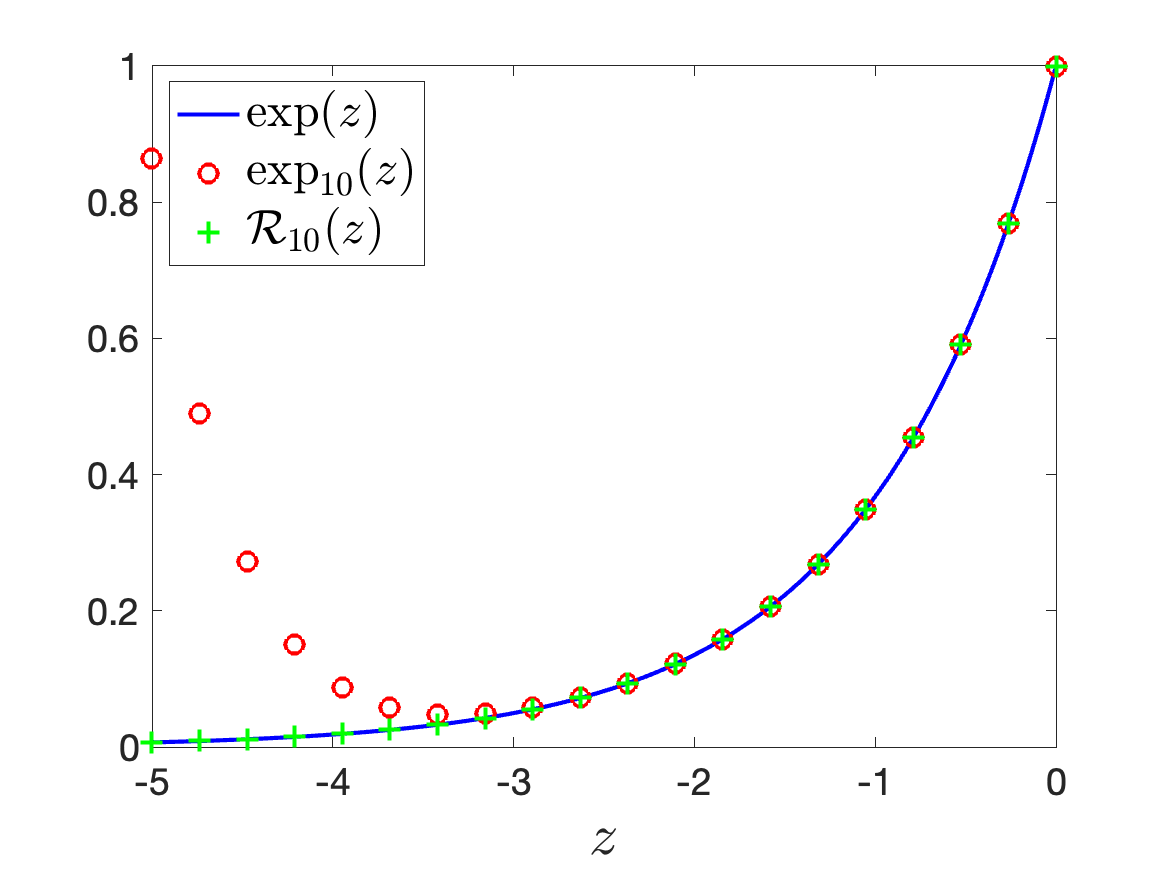}
\includegraphics[width=6cm]{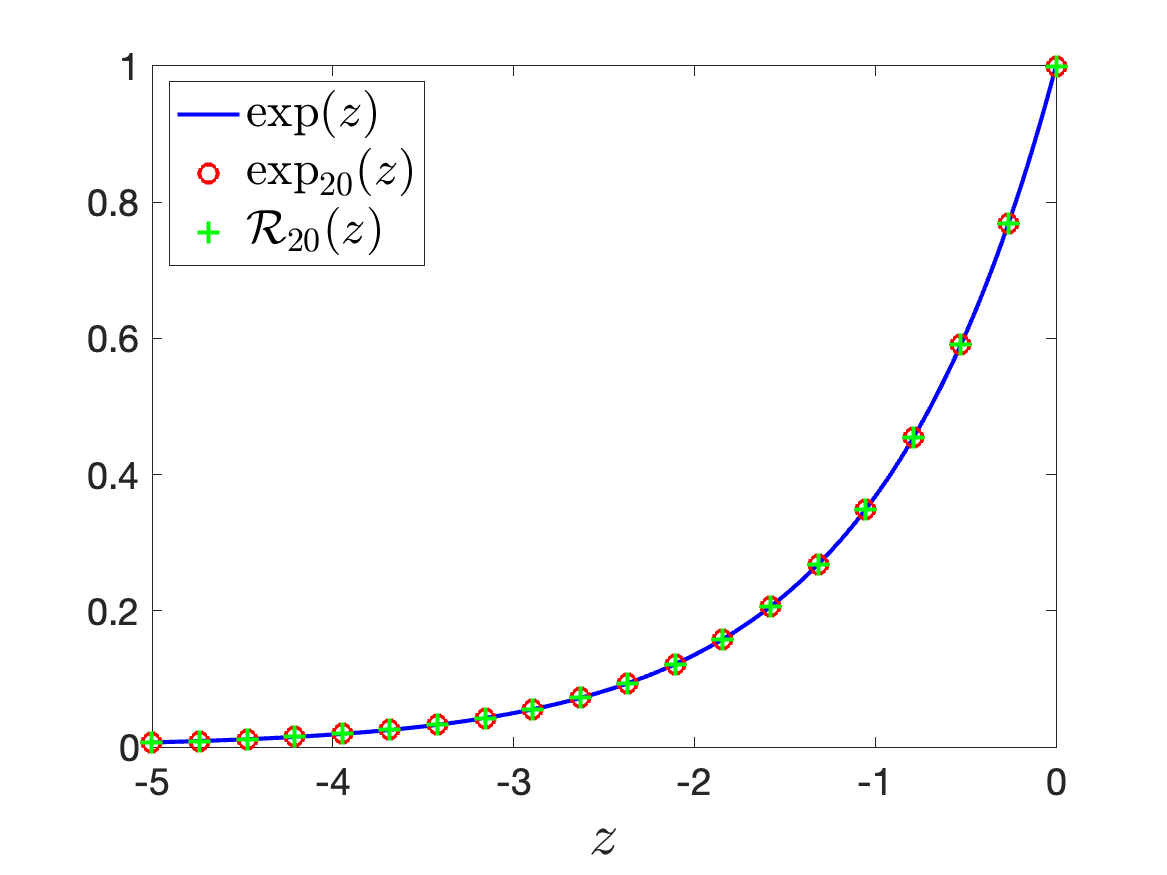}  
\end{center}
\caption{The exponential function and its polynomial and rational approximations 
rational $\exp_{n}$ and ${\mathcal R}_{n}$ on $[-5,0]$, for 
$n=10$ (left) and $n=20$ (right).} 
\label{fig:4aaa}
\end{figure}
Given $n,m\in\N$, the Pad\'e approximant~\cite{baker} of index $(m,n)$ of the exponential function is explicitly known; it is the rational function  with numerator $P_{m,n}$ and denominator $Q_{m,n}$:
\[P_{m,n}(z)=\sum_{k=0}^m\frac{(m+n-k)!m!}{(m+n)!(m-k)!k!}z^k,\quad Q_{m,n}(z)=\sum_{k=0}^n\frac{(m+n-k)!n!}{(m+n)!(n-k)!k!}(-z)^k.\] 
The function ${\mathcal R}_n$ is therefore the Pad\'e approximant of index $(0,n)$ of the exponential function, i.e., its Taylor expansion at the origin coincides with this function up to order $n$. More precisely, we have   
\begin{equation}
\label{equ:pade}
 {\mathcal R}_n^{(j)}(0)=1,\quad \quad j\in\N, \ 0\leq j\leq n,
\end{equation}
and, in the neighborhood of the origin, 
\begin{equation*}
\label{equ:dl}
{\mathcal R}_n(z) = \exp(z)+ {\cal O}(z^{n+1}).
\end{equation*}
We can slightly refine this result. Let us decompose ${\mathcal R}_n(z)$ as 
\begin{equation*}
\label{equ:serie} 
{\mathcal R}_n(z) = \exp_n(z)+ \sum_{k=n+1}^{+\infty} \,\frac{\lambda_{n,k}}{k!} z^k.
\end{equation*} 
A simple calculation shows that (recall that $n$ is supposed to be even) $\lambda_{n,n+1}=0$ and  $\lambda_{n,n+2}=-2(n+1)$. In other words, at $z=0$, the derivatives of ${\mathcal R}_n(z)$ of order greater than $n$ are not at all close to  the derivatives of the exponential function.

The partial fraction decomposition
 of ${\mathcal R}_n$ is the foundation of our numerical method to compute the exponential of a matrix. 
\begin{proposition}
 \label{pro:dec}
 We have, for all $z\in \mathbb{C}$
\begin{equation}
\label{yoyo}
{\mathcal R}_n(z)=\sum_{k=1}^n \frac{a_k^{(n)}}{z+\theta_k^{(n)}},
\end{equation}
where $\theta_k^{(n)}$ are the roots of $\exp_n$ and  
\begin{equation}
\label{yoyo2}
a_k^{(n)}= -\frac{n!}{\prod_{j\neq k}(\theta_k^{(n)}-\theta_j^{(n)})}.
\end{equation}
\end{proposition}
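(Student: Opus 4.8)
The plan is to factor the denominator of $\mathcal{R}_n$ explicitly and then read off its partial fraction expansion as a sum of residues. First I would write $\exp_n$ in factored form: as a polynomial of degree $n$ with leading coefficient $1/n!$ whose roots are $\theta_1^{(n)},\dots,\theta_n^{(n)}$, it satisfies $\exp_n(w)=\frac{1}{n!}\prod_{k=1}^n(w-\theta_k^{(n)})$. Substituting $w=-z$ and using that $n$ is even yields
\[
\exp_n(-z)=\frac{1}{n!}\prod_{k=1}^n\bigl(-z-\theta_k^{(n)}\bigr)=\frac{(-1)^n}{n!}\prod_{k=1}^n\bigl(z+\theta_k^{(n)}\bigr)=\frac{1}{n!}\prod_{k=1}^n\bigl(z+\theta_k^{(n)}\bigr),
\]
so that $\mathcal{R}_n(z)=n!\big/\prod_{k=1}^n\bigl(z+\theta_k^{(n)}\bigr)$.

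Next I would record that the roots $\theta_k^{(n)}$ are pairwise distinct: if $\theta$ were a multiple root of $\exp_n$, then $\exp_n(\theta)=\exp_n'(\theta)=\exp_{n-1}(\theta)=0$, and subtracting the two relations gives $\theta^n/n!=0$, that is, $\theta=0$, which contradicts $\exp_n(0)=1$. Hence $\mathcal{R}_n$ is a proper rational function (constant numerator, denominator of degree $n\ge 1$) with $n$ simple poles, located exactly at the points $-\theta_k^{(n)}$; its partial fraction decomposition therefore has no polynomial part and takes the form $\sum_{k=1}^n a_k^{(n)}/(z+\theta_k^{(n)})$ announced in~\eqref{yoyo}.

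Finally I would identify $a_k^{(n)}$ with the residue of $\mathcal{R}_n$ at $-\theta_k^{(n)}$:
\[
a_k^{(n)}=\lim_{z\to-\theta_k^{(n)}}\bigl(z+\theta_k^{(n)}\bigr)\mathcal{R}_n(z)=\frac{n!}{\prod_{j\neq k}\bigl(-\theta_k^{(n)}+\theta_j^{(n)}\bigr)}=\frac{n!}{\prod_{j\neq k}\bigl(\theta_j^{(n)}-\theta_k^{(n)}\bigr)}.
\]
Extracting a factor $-1$ from each of the $n-1$ factors of this product and using once more that $n$ is even, so $(-1)^{n-1}=-1$, turns this into $a_k^{(n)}=-n!\big/\prod_{j\neq k}(\theta_k^{(n)}-\theta_j^{(n)})$, which is precisely~\eqref{yoyo2}.

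The argument is essentially routine, so I do not expect a genuine obstacle; the only points needing a little care are the sign bookkeeping driven by the parity of $n$ (once when passing from $\exp_n(w)$ to $\exp_n(-z)$, and once when reversing the order of the differences in the residue) together with the simplicity of the roots, which as shown above is elementary. Equivalently, one may compute $a_k^{(n)}=1\big/\frac{d}{dz}\bigl[\exp_n(-z)\bigr]\big|_{z=-\theta_k^{(n)}}=-1/\exp_{n-1}(\theta_k^{(n)})$ and use $\exp_{n-1}(\theta_k^{(n)})=\exp_n(\theta_k^{(n)})-(\theta_k^{(n)})^n/n!=-(\theta_k^{(n)})^n/n!$ to obtain the alternative closed form $a_k^{(n)}=n!/(\theta_k^{(n)})^n$; its coincidence with~\eqref{yoyo2} is just the classical identity $\exp_n'(\theta_k^{(n)})=\frac{1}{n!}\prod_{j\neq k}(\theta_k^{(n)}-\theta_j^{(n)})$ obtained by differentiating the factored form of $\exp_n$.
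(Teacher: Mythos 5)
Your proof is correct: factoring $\exp_n(-z)=\frac{1}{n!}\prod_{k=1}^n(z+\theta_k^{(n)})$ for even $n$, checking that the roots of $\exp_n$ are simple, and reading off the residues at $-\theta_k^{(n)}$ yields exactly \eqref{yoyo}--\eqref{yoyo2}, and your closing computation also recovers the alternative expressions \eqref{yoyo234}. The paper states Proposition~\ref{pro:dec} without proof, treating it as the standard partial fraction decomposition, and your argument is precisely that standard verification; the only cosmetic remark is that the two parity-dependent signs you track cancel each other, so \eqref{yoyo2} in fact holds for odd $n$ as well.
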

One should not be alarmed in the calculation of the coefficients $a_k^{(n)}$ by the relation~\eqref{yoyo2} whose denominator is a product of the differences 
$\theta_k^{(n)}-\theta_j^{(n)}$ since the difference between two roots of $\exp_n$ is uniformly lower bounded with respect to $n$ (see \cite[Theorem 4]{marche})
\begin{equation}\label{eq:ecart}
\inf_{n\geq 2} \, \min_{j\neq k}|\theta_j^{(n)}-\theta_k^{(n)}|\geq \gamma :=0.29044\cdots    
\end{equation}
thus avoiding to divide by too small numbers in~\eqref{yoyo2}. Note also that other expressions can be used to compute the coefficients $a_k^{(n)}$, e.g., 
\begin{equation}
\label{yoyo234}
a_k^{(n)}= \frac{-1}{\exp_n'(\theta_k^{(n)})}=\frac{ n!}{(\theta_k^{(n)})^n}.
\end{equation} 
Numerical properties of these formula are investigated in Section~\ref{Sec:Floating}.
\subsection{Convergence and error estimate}
The rational approximation of a real or complex function is a well-documented problem. Results concerning  existence of best approximation, uniqueness, computation can be found, e.g., in~\cite{meinar}. 
In this section, we focus on the convergence properties of ${\mathcal R}_n(z)$ on $\left] -\infty, 0\right]$. This interval includes the spectrum of negative Hermitian matrices that we consider in this paper. 
Note however that results on others domains are available.  
Indeed,  let ${\mathbb P}_{k}$ denotes the set of polynomials of degree at most $k$, ${\mathbb P}_{m,n}$ the set of rational functions $p/q$, $p\in{\mathbb P}_{m}$, $q\in{\mathbb P}_{n}$, and 
\[E_{m,n}^{}(\exp ,\Lambda):=\min_{r\in{\mathbb P}_{m,n}(\Lambda)}\,\max_{z\in \Lambda}\,|\exp(z)-r(z)|,\]
the error of best uniform approximation of the exponential  on $\Lambda\subset\C$. 
On bounded domains, we have for example~\cite{braess}  
\[ E_{m,n}^{}(\exp ,[-1,1])\underset{n+m\to \infty}{\simeq} \, \frac{n!m!}{2^{n+m}(n+m)! (n+m+1)!}.\]
The analog of this result on the disk $B(0,\varrho)$ of radius $\varrho$ centered in $0$ is~\cite{tre} 
\[ E_{m,n}^{}(\exp ,B(0,\varrho) )\underset{n+m\to \infty}{\simeq} \, \frac{n!m!\varrho^{n+m+1}}{(n+m)! (n+m+1)!}.\]
In both  cases, we observe a fast decreasing of the error when $\Lambda$ is bonded. This is not the case in the applications which have 
motivated our study. The case $\Lambda= \left ]-\infty, 0\right ]$ is discussed in the 
pioneering work of \cite{varga}.
The authors show that the best approximation error $E_{0,n}^{}(\exp,]-\infty, 0])$ decays linearly and exhibit a particular function, which just happens to be our rational approximation ${\mathcal R}_n$.
\begin{proposition}(\cite[Lemma 1]{varga})
\label{pro:err}
We have for any real $x\leq 0$
\begin{equation}
\label{equ:error}
|{\mathcal R}_n(x)-\exp(x) | \,\leq \, M_1(n):=\frac{1}{2^n}.
\end{equation}
\end{proposition}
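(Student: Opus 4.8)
The plan is to reduce the estimate to an inequality for the truncated exponential on the non-negative real axis and then prove a term-by-term comparison of power series. Set $t=-x\ge 0$, so that $\mathcal{R}_n(x)=1/\exp_n(t)$ and $\exp(x)=e^{-t}$; the claim becomes $\bigl|1/\exp_n(t)-e^{-t}\bigr|\le 2^{-n}$ for all $t\ge 0$. First I would record the sign: since $e^t-\exp_n(t)=\sum_{k\ge n+1}t^k/k!\ge 0$ and $\exp_n(t)>0$ for $t\ge 0$, we have $\exp_n(t)\le e^t$, hence $1/\exp_n(t)\ge e^{-t}$, so the quantity inside the absolute value is non-negative. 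It therefore remains to prove the upper bound, which after writing
\[
\frac{1}{\exp_n(t)}-e^{-t}=\frac{e^t-\exp_n(t)}{e^t\exp_n(t)}
\]
amounts to showing $2^n\bigl(e^t-\exp_n(t)\bigr)\le e^t\exp_n(t)$ for every $t\ge 0$.

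To get this I would compare both sides coefficient by coefficient as power series in $t$. Expanding the Cauchy product gives $e^t\exp_n(t)=\sum_{m\ge 0}c_m t^m$ with $c_m=\frac{1}{m!}\sum_{i=0}^{\min(m,n)}\binom{m}{i}$; thus $c_m=2^m/m!\ge 0$ for $0\le m\le n$, and $c_m=\frac{1}{m!}\sum_{i=0}^n\binom{m}{i}$ for $m>n$. On the other side, $2^n\bigl(e^t-\exp_n(t)\bigr)=\sum_{m>n}\frac{2^n}{m!}t^m$. Subtracting, the entire function $e^t\exp_n(t)-2^n\bigl(e^t-\exp_n(t)\bigr)$ has coefficients $2^m/m!\ge 0$ for $m\le n$ and coefficient $\frac{1}{m!}\bigl(\sum_{i=0}^n\binom{m}{i}-2^n\bigr)$ for $m>n$.

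The final ingredient is the elementary inequality $\sum_{i=0}^n\binom{m}{i}\ge 2^n$ valid for all $m\ge n$: indeed $\binom{m}{i}\ge\binom{n}{i}$ whenever $m\ge n\ge i$ (each factor in the numerator of $\binom{m}{i}$ dominates the corresponding factor of $\binom{n}{i}$, and all factors are positive), so $\sum_{i=0}^n\binom{m}{i}\ge\sum_{i=0}^n\binom{n}{i}=2^n$. Hence every coefficient of $e^t\exp_n(t)-2^n\bigl(e^t-\exp_n(t)\bigr)$ is non-negative, so this function is non-negative on $[0,\infty)$, which is exactly the desired upper bound. Combined with the sign remark, this yields $0\le\mathcal{R}_n(x)-\exp(x)\le 2^{-n}$ for $x\le 0$. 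The main — and essentially only — difficulty is spotting that the claim should be recast as this term-by-term comparison; once the Cauchy product is written out, the binomial inequality is immediate.
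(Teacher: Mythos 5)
Your proof is correct. Note that the paper itself does not prove this proposition at all: it simply quotes it from the cited reference (Cody--Meinardus--Varga, Lemma 1), so what you have produced is a complete, self-contained elementary derivation rather than a variant of an argument in the paper. Your steps all check out: the sign follows from $\exp_n(t)\le e^t$ and $\exp_n(t)>0$ for $t=-x\ge 0$; the reduction of the upper bound to $2^n\bigl(e^t-\exp_n(t)\bigr)\le e^t\exp_n(t)$ is legitimate because $e^t\exp_n(t)>0$; the Cauchy-product coefficients $c_m=\frac{1}{m!}\sum_{i=0}^{\min(m,n)}\binom{m}{i}$ are right, giving $2^m/m!$ for $m\le n$; and the key inequality $\sum_{i=0}^n\binom{m}{i}\ge\sum_{i=0}^n\binom{n}{i}=2^n$ for $m\ge n$, via $\binom{m}{i}\ge\binom{n}{i}$, makes every coefficient of $e^t\exp_n(t)-2^n\bigl(e^t-\exp_n(t)\bigr)$ nonnegative, hence the function is nonnegative on $[0,\infty)$. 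This coefficientwise comparison is in fact essentially the classical argument behind the original lemma, so your proof both fills the gap left by the citation and matches the spirit of the source; it also yields the slightly sharper one-sided statement $0\le\mathcal{R}_n(x)-\exp(x)\le 2^{-n}$ for $x\le 0$, and it requires no parity assumption on $n$ since $\exp_n(t)\ge 1$ on $t\ge 0$ regardless.
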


The convergence of ${\mathcal R}_n(x)$ to $\exp(x)$ is therefore linear on the real negative half-line\footnote{For the sake of completeness, let us note that the optimal linear decrease is given by Sch\"onhage in~\cite{scho} who showed that $\lim_{n \to + \infty}\, E_{0,n}^{1/n}(\exp,]-\infty, 0])=1/3$.}.
The error on this interval is represented in Figure \ref{fig:ErrC10} for various values of $n$. 
Figure \ref{fig:ErrC1} shows iso-curves of the norm of the error for $n=32$ as well as  points $-\theta_k^{(n)}$.  The values presented in the graphics for $n=32$ are the $10^{-k}$ for $k=0,\cdots, 14$. One observes there the rapid decay of the approximation along the real half-axis. We also observe a remarkable decay in the whole left half-plane. 
\begin{figure}[ht]
\begin{center}
\includegraphics[width=6cm]{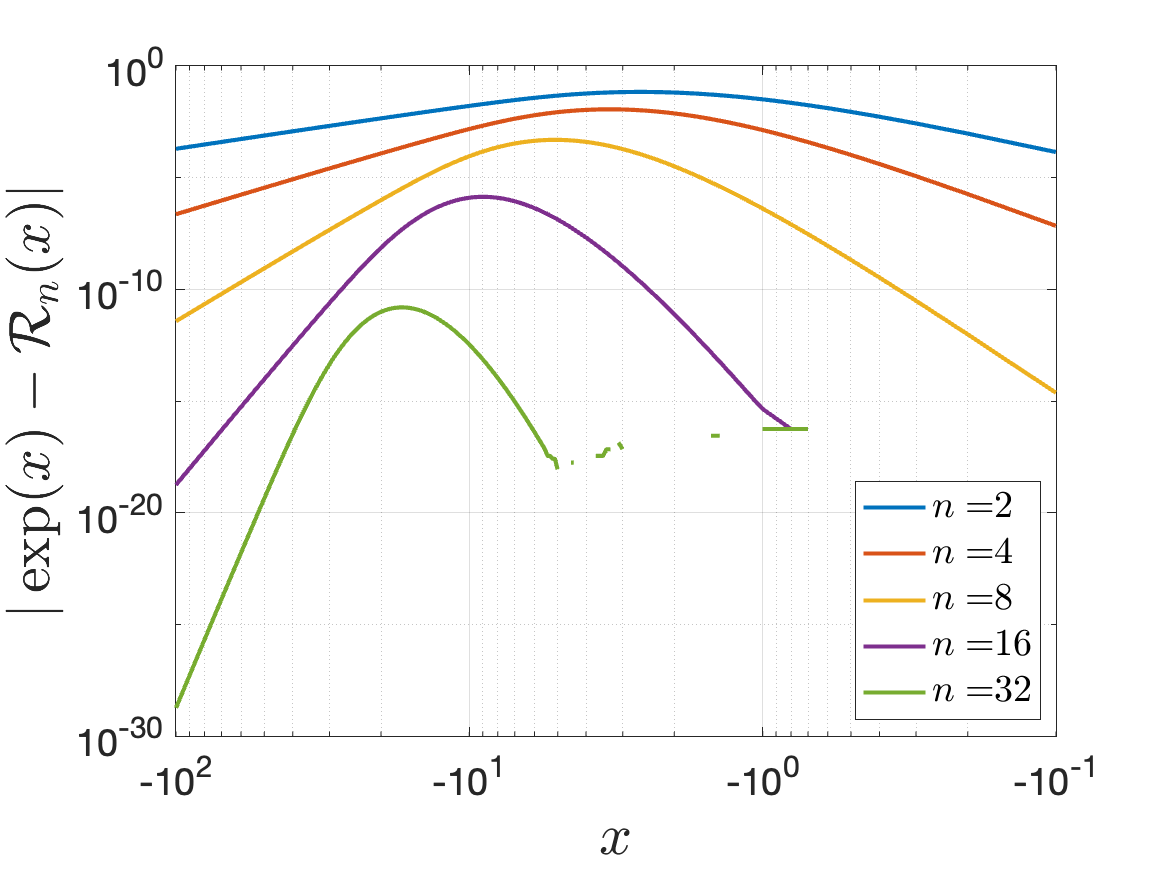}  
\end{center}
\caption{The error $|{\mathcal R}_{n}(x)-\exp(x)|$ for $n\in\{4, 8, 16, 
32\}$.} 
\label{fig:ErrC10}
\end{figure} 
  
\begin{figure}[ht]
\begin{center}
\includegraphics[width=6cm]{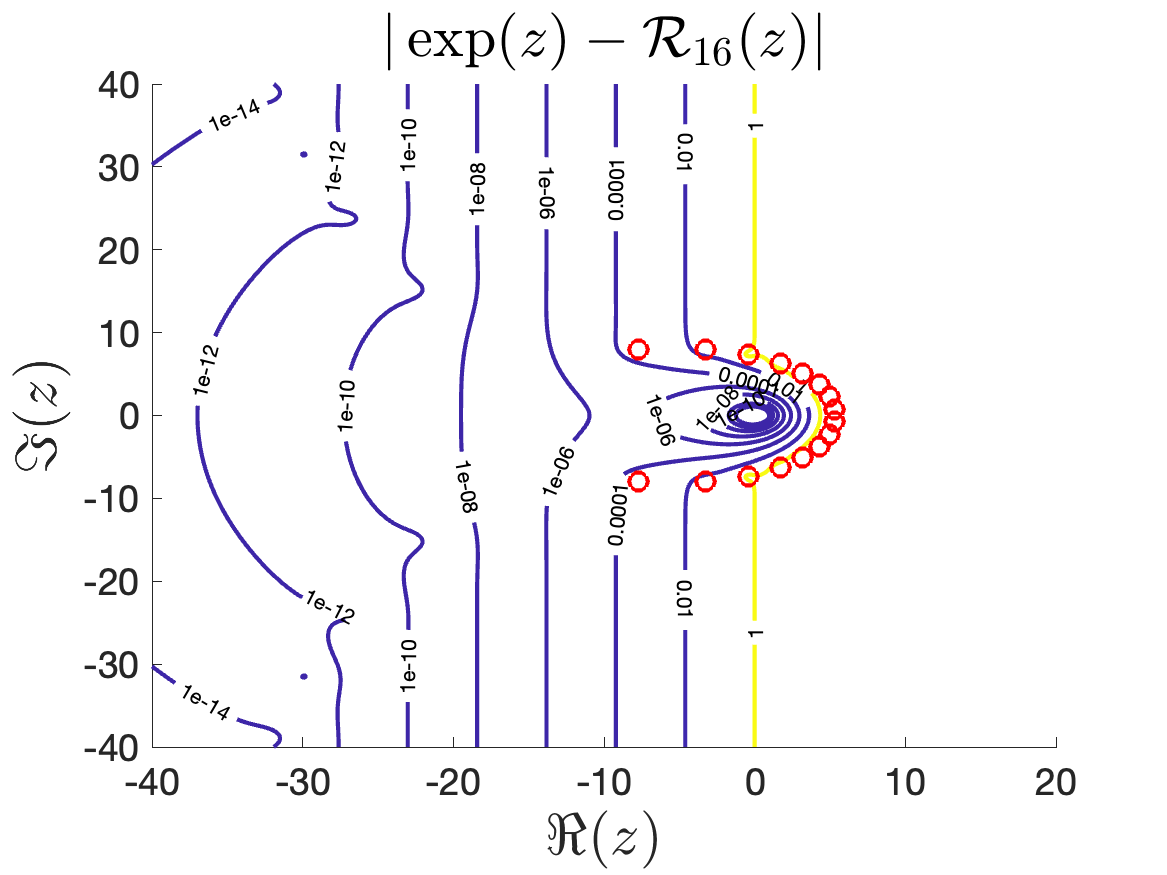}
\includegraphics[width=6cm]{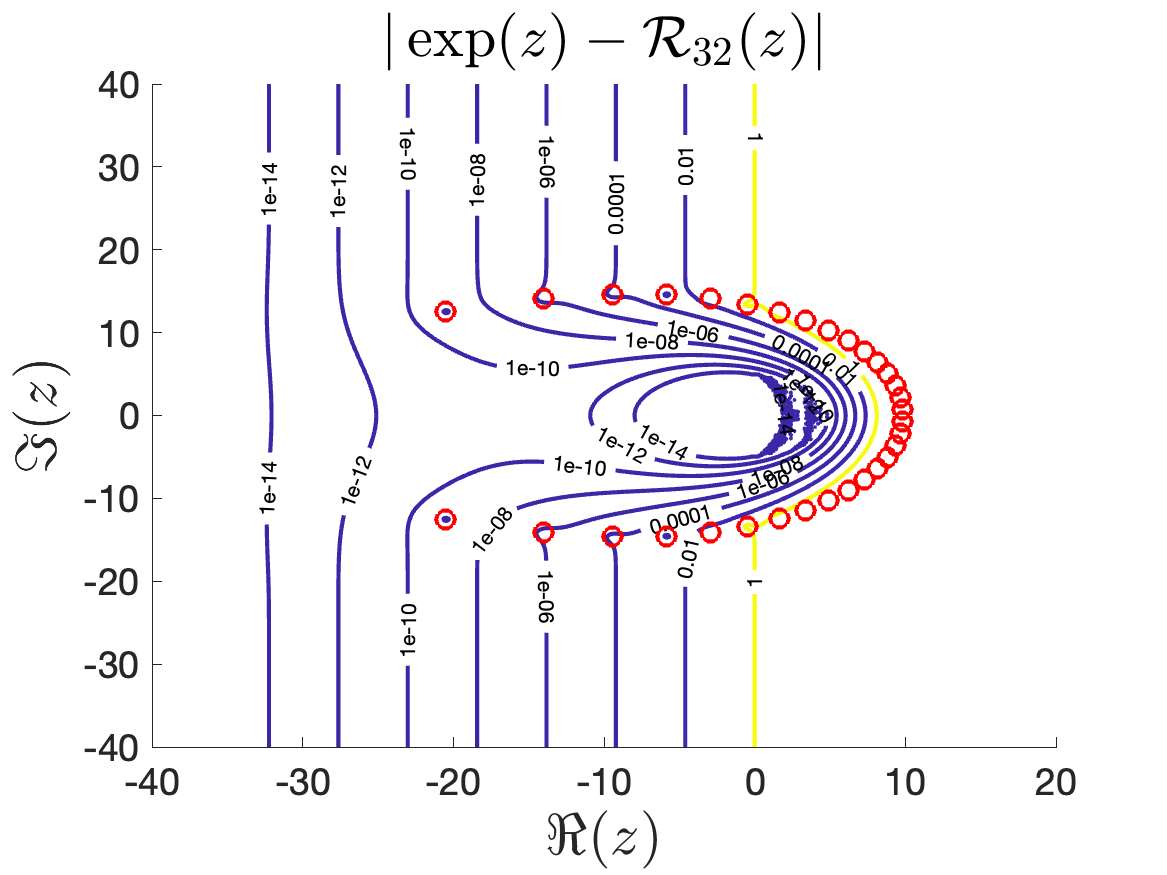}  
\end{center}
\caption{ Norm of error ${\mathcal R}_{n}(z)-\exp(z)$ and poles of ${\mathcal R}_{n}$ for $n=16$ (left) and $n=32$ (right).} 
\label{fig:ErrC1}
\end{figure}
Before going further, let us state a technical result.

\begin{lemma}\label{lem:pre}
The function $f_n(x):=\exp_n(x)\exp(-x)$ satisfies
\begin{align}
f_n(n+1)< & \, \frac 12 \label{ineq:12}\\
f^2_n(x)\leq & \, \frac{n+1}{n}f_{n-1}(x)f_{n+1}(x).\label{ineq:fn}
\end{align}
\end{lemma}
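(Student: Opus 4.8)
The plan is to treat the two inequalities separately, as they rely on quite different mechanisms. For the bound $f_n(n+1)<\tfrac12$, I would first write $f_n(x)=\exp(-x)\sum_{k=0}^n x^k/k!$ and use the integral remainder representation $\exp(x)-\exp_n(x)=\frac{1}{n!}\int_0^x (x-t)^n e^{t}\,dt$, so that $1-f_n(x)=\frac{e^{-x}}{n!}\int_0^x(x-t)^n e^t\,dt=\frac{1}{n!}\int_0^x s^n e^{-s}\,ds$ after the substitution $s=x-t$. Hence $f_n(n+1)=\tfrac12$ is equivalent to $\frac{1}{n!}\int_0^{n+1}s^n e^{-s}\,ds=\tfrac12$, i.e. to the statement that the median of a Gamma$(n+1,1)$ distribution is at least $n+1$. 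I would invoke (or reprove, via the Ramanujan-type estimates or a short convexity/Stirling argument) the classical fact that the incomplete gamma ratio $\frac{1}{n!}\int_0^{n}s^n e^{-s}\,ds<\tfrac12$ and push the upper endpoint from $n$ to $n+1$; equivalently, one can phrase it through a Poisson random variable $X\sim\mathrm{Poisson}(n+1)$ as $f_n(n+1)=\mathbb P(X\le n)=\mathbb P(X<n+1)$, and use the known bound that the median of $\mathrm{Poisson}(\mu)$ is at least $\lfloor\mu\rfloor$ (here $\lfloor n+1\rfloor=n+1$), with the inequality strict because $n+1$ is not itself an atom-free median. This probabilistic reading is the cleanest route and the place where I expect the only real subtlety: making the strict inequality airtight for all even $n$ rather than merely asymptotically.

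For the log-concavity-type inequality $f_n^2(x)\le \frac{n+1}{n}f_{n-1}(x)f_{n+1}(x)$, the factor $e^{-x}$ cancels from both sides, so it suffices to prove $\exp_n(x)^2\le\frac{n+1}{n}\exp_{n-1}(x)\exp_{n+1}(x)$ for all real $x$. I would expand both sides as polynomials and compare coefficients of $x^m$. Using $\exp_n(x)=\sum_{k\le n}x^k/k!$, the coefficient of $x^m$ on the left is $\sum_{k=0}^{m}\frac{1}{k!(m-k)!}\mathbf 1_{k\le n}\mathbf 1_{m-k\le n}$ and on the right $\frac{n+1}{n}\sum_{k=0}^m\frac{1}{k!(m-k)!}\mathbf 1_{k\le n-1}\mathbf 1_{m-k\le n+1}$. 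For $m\le 2n$ the difference should be expressible, after multiplying by $m!$, as a sum of binomial coefficients that is manifestly nonnegative; the only ranges of $m$ that need care are $m$ near $2n$, where the left side loses terms. Alternatively — and this is the route I would actually follow — I would avoid coefficient bookkeeping and instead use the differential recursion $\exp_n'=\exp_{n-1}$ together with a Turán-type argument: define $g(x)=\frac{n+1}{n}\exp_{n-1}(x)\exp_{n+1}(x)-\exp_n(x)^2$, check $g(0)=\frac{n+1}{n}-1=\frac1n>0$, and show $g'(x)$ can be written so that $g$ stays positive, e.g. by exhibiting a first-order linear differential inequality satisfied by $g$ (using $\exp_{n+1}-\exp_n=x^{n+1}/(n+1)!$ and $\exp_n-\exp_{n-1}=x^n/n!$ to simplify the derivative).

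The main obstacle is the second inequality's behavior for \emph{negative} $x$ and for $x$ in the transitional range where $\exp_n$, $\exp_{n-1}$, $\exp_{n+1}$ can have opposite signs (recall $n$ is even so $\exp_n>0$ and $\exp_{n+1}$ may vanish); a pure coefficient-comparison handles $x\ge0$ immediately but says nothing when cross terms are negative. I would resolve this by noting $\exp_{n-1}(x)\exp_{n+1}(x)=\exp_n(x)^2-\big(\text{something}\big)$ via the explicit small corrections above, reducing the claim to $\tfrac1n\exp_n(x)^2\ge\exp_n(x)\big(\tfrac{x^{n+1}}{(n+1)!}-\tfrac{x^n}{n!}\big)+\tfrac{x^{2n+1}}{n!\,(n+1)!}$ or a similar identity, and then bounding the right-hand side using $|\theta_k^{(n)}|\ge1$ (estimate~\eqref{equ:racn}) to control $\exp_n$ from below away from its roots. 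If that proves unwieldy, the fallback is the Turán-inequality literature for partial sums of $e^x$, which gives exactly such log-concavity statements, and to cite it; but I would first attempt the self-contained derivative argument sketched above, since $g(0)>0$ and the structure of $g'$ look favorable.
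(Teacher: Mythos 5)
Your treatment of \eqref{ineq:12} is a legitimate but different route, and as written it outsources the actual content. The fact you propose to start from, $\tfrac{1}{n!}\int_0^{n}s^{n}e^{-s}\,ds<\tfrac12$, does not yield what is needed: the required statement is $\tfrac{1}{n!}\int_0^{n+1}s^{n}e^{-s}\,ds>\tfrac12$, and the crossing of $\tfrac12$ between the endpoints $n$ and $n+1$ is precisely the inequality to be proved, not a routine ``push of the endpoint''. The Poisson formulation ($\mathbb{P}(X\le n)<\tfrac12$ for $X\sim\mathrm{Poisson}(n+1)$, i.e.\ the median of a Poisson with integer mean is at least that mean) does do the job, but that is itself a nontrivial classical theorem (essentially Ramanujan's $\theta_n\in(\tfrac13,\tfrac12)$), so your argument is correct only modulo a citation of comparable depth to the lemma itself. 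The paper instead proves \eqref{ineq:12} directly and elementarily: it shows $\exp_n(n+1)<\exp(n+1)-\exp_n(n+1)$ by pairing each term of the truncated sum with a term of the tail and proving term-by-term domination by a short induction on the offset $j$ (equality for the closest pair, and each step multiplies the ratio by a factor of the form $1-(j/n)^2<1$).

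For \eqref{ineq:fn} there is a genuine gap: neither of your two routes is carried out. The coefficient comparison is only sketched and in any case can only address $x\ge0$, the Tur\'an/ODE argument is left as a hope with a literature fallback, and your plan to rescue negative $x$ via root estimates cannot succeed because the inequality is false there: for $n=2$, $x=-2$ one has $\exp_2(-2)=1$, $\exp_1(-2)=-1$, $\exp_3(-2)=-\tfrac13$, so $f_2^2(-2)=e^{4}$ while $\tfrac{3}{2}f_1(-2)f_3(-2)=\tfrac12 e^{4}$. The lemma must therefore be read (and in the paper is only used) for $x\ge0$; in the proof of Proposition~\ref{pro:wxcv1} it is applied solely at $x=n+2$. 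The key idea you are missing is one you actually wrote down while treating \eqref{ineq:12}: the integral representation $f_n(x)=\tfrac{1}{n!}\int_x^{+\infty}t^{n}e^{-t}\,dt$. The paper's proof of \eqref{ineq:fn} is a one-line Cauchy--Schwarz on this representation: writing $t^{n}e^{-t}=\bigl(t^{(n-1)/2}e^{-t/2}\bigr)\bigl(t^{(n+1)/2}e^{-t/2}\bigr)$ for $t\ge x\ge 0$ gives $\bigl(n!\,f_n(x)\bigr)^2\le (n-1)!\,f_{n-1}(x)\,(n+1)!\,f_{n+1}(x)$, which is exactly \eqref{ineq:fn} after dividing by $(n!)^2$. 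Your $x\ge0$ coefficient comparison could probably be completed, but as submitted the proof of \eqref{ineq:fn} is incomplete, and the worry about $x<0$ should be resolved by restricting the claim, not by trying to prove it there.
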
  
\begin{proof}
We first show that 
\begin{equation}\label{eq:exp12}
\exp_{n}(n+1)=\sum_{k=0}^{n}\frac{(n+1)^k}{k!}< \exp(n+1) -\exp_{n}(n+1)= \sum_{k=n+1}^{+\infty}\frac{n^k}{k!},
\end{equation}
which directly leads to~\eqref{ineq:12}.
To get~\eqref{eq:exp12}, we compare the terms $k=n-j$ and $k=n+j-1$ of the respective sums.
Precisely, we have 
\[\frac{n^{n-j}}{(n-j)!} \leq \frac{n^{n+j-1}}{(n+j-1)!}.\]
The proof is by induction on $j$. For $j=1$, we actually have equality. Assuming that the property is true at rank $j$, we have
\begin{align*}
\frac{n^{n-(j+1)}}{(n-(j+1))!} =& \frac{n^{n-j}}{(n-j)!}\frac{n-j}{n} < \frac{n^{n+j-1}}{(n+j-1)!}\frac{n-j}{n} 
= \frac{n^{n+j}}{(n+j)!}(1-(\frac j n)^2)\\
<& \frac{n^{n+j}}{(n+j)!},
\end{align*}
hence the result. Inequality~\eqref{ineq:fn} simply follows from the Cauchy-Schwarz inequality applied to  
\[\int_x^{+\infty} \exp(-t)t^n dt = \int_x^{+\infty} (\exp(-t/2)t^{\frac{n-1}{2}}) 
\, 
(\exp(-t/2)t^{\frac{n+1}{2}})\,dt,\]
since $f_n(x)= \int_x^{+\infty} \exp(-t)\frac{t^n}{n!} dt$.
\end{proof}

In the following proposition, we summarize some properties of the function $err_n: x \in \left]- \infty,0\right]\longmapsto 
{\mathcal R}_{n}(x)-\exp(x)>0$.

\begin{proposition}
\label{pro:wxcv1} 
For $n\geq 1$, the function $err_n$ reaches its maximum at a single point $\xi_n <0$, is 
increasing on  $]-\infty, \xi_n]$ and 
decreasing on $[\xi_n,0]$. Moreover, we have
\begin{equation}\label{ineq:loc}
\frac n 2 \leq -\xi_n \leq n+2.
\end{equation}

\end{proposition}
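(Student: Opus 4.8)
The plan is to substitute $y=-x\ge 0$ and to study
$g_n(y):=err_n(-y)={\mathcal R}_n(-y)-\exp(-y)=\dfrac{1}{\exp_n(y)}-e^{-y}$
on $[0,+\infty)$: the proposition is then equivalent to the statement that $g_n$ has a single interior maximum, at a point $\eta_n$ with $\tfrac n2\le\eta_n\le n+2$, and $\xi_n=-\eta_n$. (We take $n$ even, as everywhere in the paper, so $n\ge 2$; the case $n=1$ can be checked by hand.) Throughout I would use the functions $f_k(y)=\exp_k(y)e^{-y}$ of Lemma~\ref{lem:pre} together with $\delta_k(y):=\frac{y^k}{k!}e^{-y}$, which satisfy $f_k=f_{k-1}+\delta_k$, $f_k'=-\delta_k$, $\delta_k'=\delta_{k-1}-\delta_k$ and $\delta_k=\tfrac yk\,\delta_{k-1}$. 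A direct computation then gives
\[
g_n'(y)=e^{-y}-\frac{\exp_{n-1}(y)}{\exp_n(y)^2}=\frac{e^{y}}{\exp_n(y)^2}\,Q_n(y),\qquad Q_n(y):=f_n(y)^2-f_{n-1}(y),
\]
and since $n$ is even, $\exp_n>0$, so $\operatorname{sign}g_n'=\operatorname{sign}Q_n$ on $[0,+\infty)$. Everything thus reduces to proving that $Q_n$ changes sign exactly once on $(0,+\infty)$, from $+$ to $-$; note that $Q_n(0)=0$, that $Q_n(y)\sim y^n/n!>0$ as $y\to 0^+$, and that $Q_n(y)\to 0^-$ as $y\to+\infty$ because $f_{n-1}$ decays polynomially slower than $f_n^2$.

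To get the uniqueness of this sign change I would pass to $h_n:=f_{n-1}/f_n^2$, so that $Q_n=f_n^2(1-h_n)$ and hence $\operatorname{sign}Q_n=\operatorname{sign}(1-h_n)$, with $h_n(0)=1$ and $h_n(y)\to+\infty$ as $y\to+\infty$. Differentiating and using the four relations above (notably $\delta_n=\tfrac yn\delta_{n-1}$) yields
\[
h_n'(y)=\frac{\delta_{n-1}(y)}{f_n(y)^3}\,A_n(y),\qquad A_n(y):=\frac{2y-n}{n}\,f_{n-1}(y)-\delta_n(y),
\]
so $\operatorname{sign}h_n'=\operatorname{sign}A_n$ on $(0,+\infty)$; a further differentiation, using once more $\delta_n'=\delta_{n-1}-\delta_n$ and $\delta_n=\tfrac yn\delta_{n-1}$, collapses to the clean identity
\[
A_n'(y)=\frac2n\,f_{n-1}(y)-\delta_n(y)=\frac2n\,e^{-y}\Bigl(\exp_{n-1}(y)-\frac{y^n}{2(n-1)!}\Bigr).
\]

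The key point is that $A_n'$ changes sign exactly once. Indeed the derivative of $y\mapsto y^n/\exp_{n-1}(y)$ has the sign of $n\exp_{n-1}(y)-y\exp_{n-2}(y)=\sum_{k=0}^{n-1}\frac{(n-k)y^k}{k!}$, which is strictly positive for $y\ge 0$ since $n-k\ge 1$ whenever $0\le k\le n-1$ (here $n\ge 2$ is used); hence $y^n/\exp_{n-1}(y)$ increases strictly from $0$ to $+\infty$ on $[0,+\infty)$, so $A_n'>0$ then $A_n'<0$, i.e.\ $A_n$ is unimodal. Now $A_n(0)=-1<0$ and $A_n(y)\to 0$ as $y\to+\infty$ (both terms of $A_n$ do), so, $A_n$ being eventually strictly decreasing, $A_n>0$ for all large $y$; combined with unimodality this shows $A_n$ changes sign exactly once, from $-$ to $+$. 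Consequently $h_n$ is strictly decreasing and then strictly increasing; since $h_n(0)=1$ and $h_n(y)\to+\infty$, this forces $h_n=1$ at a single point $\eta_n\in(0,+\infty)$, with $h_n<1$ (hence $Q_n>0$) on $(0,\eta_n)$ and $h_n>1$ (hence $Q_n<0$) on $(\eta_n,+\infty)$. Undoing $x=-y$, this is exactly the stated monotonicity of $err_n$ with its unique maximum at $\xi_n:=-\eta_n<0$.

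It remains to locate $\eta_n$. For the lower bound, on $(0,n/2)$ the factor $2y-n$ is negative, so $A_n<0$, so $h_n$ is strictly decreasing there; as $h_n(0)=1$ this yields $h_n<1$, i.e.\ $Q_n>0$, on all of $(0,n/2)$, whence $\eta_n\ge n/2$. For the upper bound I would use~\eqref{ineq:fn}, $Q_n=f_n^2-f_{n-1}\le f_{n-1}\bigl(\tfrac{n+1}{n}f_{n+1}-1\bigr)$: since $f_{n+1}$ is strictly decreasing on $(0,+\infty)$ and $f_{n+1}(n+2)<\tfrac12\le\tfrac{n}{n+1}$ by~\eqref{ineq:12} applied at order $n+1$, one has $\tfrac{n+1}{n}f_{n+1}(y)<1$ for every $y\ge n+2$, hence $Q_n(y)<0$ there and $\eta_n\le n+2$. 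This gives $\tfrac n2\le-\xi_n=\eta_n\le n+2$, i.e.~\eqref{ineq:loc}. The only genuine obstacle in the plan is the uniqueness of the sign change of $Q_n$; the successive reductions from $Q_n$ to $h_n$ to $A_n$ to $A_n'$ are designed precisely for that, each one trading a function for one whose variations are easier to read off, the decisive gain being the explicit, sign-definite formula for $A_n'$.
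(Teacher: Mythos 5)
Your argument is correct, and its core differs from the paper's. Both proofs start the same way: set $y=-x$, observe that the sign of the derivative of the error on $(0,+\infty)$ is governed by $Q_n=f_n^2-f_{n-1}$ (this is exactly the numerator in \eqref{eq:sign}), and both obtain the upper bound $-\xi_n\le n+2$ from Lemma~\ref{lem:pre}, via \eqref{ineq:fn} together with \eqref{ineq:12} applied at order $n+1$ at the point $y=n+2$. The divergence is in the crucial uniqueness-of-the-sign-change step: the paper works with the auxiliary function $g_n$ of \eqref{eq:defgn}, shows $g_n^{(n+1)}<0$, proves that the sequence $u_k=g_n^{(k)}(0)$ is decreasing, and then cascades down through the derivatives to conclude that $g_n$ is positive then negative; you instead pass to the ratio $h_n=f_{n-1}/f_n^2$ and, using the elementary relations $f_k'=-\delta_k$, $\delta_k=\tfrac yk\delta_{k-1}$, reduce in two explicit differentiations to $A_n'=\tfrac2n f_{n-1}-\delta_n$, whose single sign change follows from the strict monotonicity of $y\mapsto y^n/\exp_{n-1}(y)$ (I checked the identities $h_n'=\delta_{n-1}f_n^{-3}A_n$ and $A_n'=\tfrac2n f_{n-1}-\delta_n$; they are right, as are the limits $A_n(0)=-1$, $A_n(+\infty)=0$, $h_n(0)=1$, $h_n(+\infty)=+\infty$). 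Your treatment of the lower bound is also different and simpler: on $(0,n/2)$ the factor $2y-n$ makes $A_n<0$, hence $h_n<1$ and $Q_n>0$ there, whereas the paper estimates $g_n(n/2)>0$ through a geometric-series bound on $(\exp(y)-\exp_n(y))/y^n$. What your route buys is low-order, fully explicit calculus (no $n$-fold differentiation, no induction on the $u_k$) and a sign pattern read off from a closed-form expression; what the paper's route buys is that it stays with a single auxiliary function tied directly to \eqref{eq:sign} rather than a chain $Q_n\to h_n\to A_n\to A_n'$ of substitutions. Two cosmetic remarks: the positivity of $\exp_n$ on $[0,+\infty)$ that you invoke holds for every $n$ (evenness is only needed on the whole real line, and in any case the prefactor $e^y/\exp_n(y)^2$ is a square times $e^y$), and $f_{n-1}$ decays exponentially, not polynomially, slower than $f_n^2$ — but the conclusion $Q_n\to0^-$, which you only use through $h_n\to+\infty$, is correct.
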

\begin{proof}
To simplify the notation, we introduce $y=-x$ and study the error $err_n(y)={\mathcal R}_{n}(-y)-\exp(-y)$ on the half-axis $(0, +\infty[$, in which we have excluded $y=0$ where the error cancels.
Since 
\begin{align}err_n'(y)
  = & \exp(2y) \frac{(\exp_n(y)\exp(-y))^2-\exp_{n-1}(y)\exp(-y)}{\exp_n(y)^2} \label{eq:sign}\\
  = & y^n\frac{\exp(-y) }{\exp_n(y)^2} \left(\frac{1}{n!}\exp_n(y)-\exp_{n-1}(y)\frac{\exp(y)-\exp_n(y)}{y^n}\right), \nonumber  
\end{align}
the variations on $err_n$ are  determined by the sign of 
\begin{equation}\label{eq:defgn}
g_n(y):= \frac{1}{n!}\exp_n(y)-\exp_{n-1}(y)\frac{\exp(y)-\exp_n(y)}{y^n}.
\end{equation}
In this formula, the last term as well as all its derivatives is positive on $I=]0, +\infty[$ so that
$g_n^{(n+1)}(y)< 0$ on this interval.
To  prove that $g_n$ has an unique zero $\xi_n$ in $I$, we shall show that for some $a_1>0$, $g_n'$ is strictly positive on 
in interval $]0,a_1[$ and strictly negative on $]a_1,+\infty[$. That guarantees the result since $g_n(0)>0$ and $\lim_{y\rightarrow+\infty}g_n(y)=-\infty$. The values of $g_n^{(k)}$ $(k=1,\cdots,n)$ on both sides of interval $I$ are of importance in the analysis. Note first that $\lim_{y\rightarrow+\infty}g_n^{(k)}(y)=-\infty$ for all $k\in \N$ and that  
the sequence $u_k:=g_n^{(k)}(0)= \frac{1}{n!} - \sum_{\ell=1}^k \frac{k!}{(k-\ell)!(n+\ell)!} $ is decreasing. Indeed, for $k=1,\cdots,n-1$ we have 
\begin{align*}
u_{k+1}-u_k
=& -\frac{(k+1)!}{(n+k+1)!} + \sum_{\ell=1}^k \frac{k!}{(k-\ell)!(n+\ell)!} -  \frac{(k+1)!}{(k+1-\ell)!(n+\ell)!} \\
=& -\frac{(k+1)!n!}{(n+k+1)!} -k!\sum_{\ell=1}^k \frac{1}{(k-\ell)!(n+\ell)!}\frac{\ell}{k+1-\ell} <0.
\end{align*}
Since the first term in the right hand side of~\eqref{eq:defgn} is a polynomial of order $n$, we have $g_n^{(n+1)}(y)< 0$ on $I$.
Hence $g_n^{(n)}$ is strictly decreasing on this interval.
If $u_n>0$, then for some $a_n>0$, $g_n^{(n)} >0$  on some interval $[0, a_n[$ and $g_n^{(n)}<0$ on $]a_n, +\infty[$. 
Hence function $g_n^{(n-1)}$ is strictly increasing on the first interval and decreasing on the second one. It follows that there exists a unique $a_{n-1}>a_n$ where this function vanishes. This property clearly spreads to $g_n$.
Otherwise, $u_n\leq 0$, and the previous reasoning can be applied to the largest $n'$ such that $u_{n'}> 0$.

To prove~\eqref{ineq:loc}, we first show that $err'_n(n/2)<0$, i.e., $g_n(n/2)>0$. Set $y=n\theta$, with $0<\theta<1$. We have
\begin{align*}
\frac{\exp(y)-\exp_n(y)}{y^n}
=&\frac 1{(n\theta)^n}\sum_{\ell = n+1}^{+\infty} \frac{\theta^\ell}{\ell !/n^\ell} \\
\leq&\frac 1{(n\theta)^n}\frac{n^{n+1}}{(n+1) !}\sum_{\ell = n+1}^{+\infty} \theta^\ell  
=\frac{n}{(n+1) !}\frac{\theta}{1- \theta}.
\end{align*}
As a consequence, we get
\begin{align*}
g_n(n\theta)
\geq & \frac{(n\theta)^n}{(n!)^2} + \frac{\exp_{n-1}(n\theta)}{n!}\left(1-\frac{n}{n+1}\frac{\theta}{1- \theta}\right),
\end{align*}
which is positive when $\theta=1/2$. 

We then prove that $err'_n(n+2)<0$. Rewriting~\eqref{eq:sign} with the notation of Lemma~\ref{lem:pre}, we get $err_n'(y)=\exp(2y) \frac{(f_n(y))^2-f_{n-1}(y)}{\exp_n(y)^2}$. The task is now to find the sign of $(f_n(n+2))^2-f_{n-1}(n+2)$. Because of  Lemma~\ref{lem:pre}, we have
\begin{align*}
(f_n(n+2))^2-f_{n-1}(n+2)\leq & \left( \frac{n+1}{n}f_{n+1}(n+2)-1\right)f_{n-1}(n+2)\\
\leq & \left( \frac{n+1}{2n}-1\right)f_{n-1}(n+2),
\end{align*}
where the former inequality follows from~\eqref{ineq:fn} and the latter from~\eqref{ineq:12}. This shows that $(f_n(n+2))^2-f_{n-1}(n+2)<0$. Hence $err'_n(n+2)<0$.
\end{proof}
\section{Approximation of the exponential of Hermitian matrices}~\label{sec:mat}
Let $A$ be a square matrix of ${\cal M}_d(\mathbb{C})$. Given $n>1$, we suppose that all matrices $A-\theta_k^{(n)}I$ are invertible, i.e., their spectrum does not contain any root of any $\exp_n$. This is the case if the matrix $A$ is Hermitian (recall that $n$ is supposed to be even). The same is true for any matrix provided that $n$ is large enough. We propose the following approximation of the exponential of $A$ 
\begin{equation}
\label{equ:ExpoMatRat}
\exp(A) \simeq {\mathcal R}_n(A):=\sum_{k=1}^na_k^{(n)}(A+\theta_k^{(n)}I)^{-1},
\end{equation}
where $I$ denotes the identity matrix.
\begin{remark}
Note that ${\mathcal R}_n(0)=I$ and that if the matrix $D\in{\cal M}_d(\mathbb{C})$ is diagonal, so is matrix ${\mathcal R}_n(D)$ with $({\mathcal R}_n(D))_{i,i}= {\mathcal R}_n(D_{i,i})$.
On the other hand, for any invertible matrix $P\in{\cal M}_d(\mathbb{C})$, we have 
\[{\mathcal R}_n(PAP^{-1})=P{\mathcal R}_n(A)P^{-1}.\] 
 
\end{remark}


From now on, we restrict our attention to negative Hermitian matrices.  In view of   Proposition~\ref{pro:wxcv1}, we can state a
specific estimate in this case. 
\begin{theorem}
Assume that $Spec{(A)}\subset \mathbb{R}^-$ and let $\varrho(A):=\max_{\lambda\in Spec{(A)}} |\lambda|$. If $n>2\varrho(A)$, 
then 
\[\|\exp(A)-{\mathcal R}_n(A)\|_2\leq \varepsilon,\] 
where $\varepsilon={\mathcal R}_n(-\varrho(A))-\exp(-\varrho(A))$.
\end{theorem}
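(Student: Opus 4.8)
The plan is to reduce the matrix estimate to the scalar analysis of Proposition~\ref{pro:wxcv1} by spectral decomposition. Since $A$ is Hermitian with $\mathrm{Spec}(A)\subset\mathbb{R}^-$, write $A=U\,\mathrm{diag}(\lambda_i)\,U^\ast$ with $U$ unitary and each $\lambda_i\in[-\varrho(A),0]$. By the remark following~\eqref{equ:ExpoMatRat}, both $\exp(A)$ and ${\mathcal R}_n(A)$ are preserved under unitary conjugation and act diagonally on the eigenbasis, so $\exp(A)-{\mathcal R}_n(A)=U\,\mathrm{diag}\big(\exp(\lambda_i)-{\mathcal R}_n(\lambda_i)\big)\,U^\ast$. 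The key point is that $U$ being unitary gives $\kappa_2(U)=1$, so $\|\exp(A)-{\mathcal R}_n(A)\|_2=\max_i|{\mathcal R}_n(\lambda_i)-\exp(\lambda_i)|\le\max_{x\in[-\varrho(A),0]}|err_n(x)|$; this is the mechanism already described around~\eqref{equ:con}.

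Next I would pin down where $err_n$ attains its maximum over $[-\varrho(A),0]$. By Proposition~\ref{pro:wxcv1}, $err_n$ is unimodal on $]-\infty,0]$ with a single interior maximizer $\xi_n$, increasing on $]-\infty,\xi_n]$ and decreasing on $[\xi_n,0]$. Therefore on the subinterval $[-\varrho(A),0]$, the restriction of $err_n$ attains its maximum either at the interior point $\xi_n$ (if $\xi_n\in[-\varrho(A),0]$) or, if $\xi_n<-\varrho(A)$, at the left endpoint $-\varrho(A)$ because $err_n$ is then decreasing throughout $[-\varrho(A),0]$. The hypothesis $n>2\varrho(A)$ is exactly what rules out the first case: by the lower bound in~\eqref{ineq:loc} we have $-\xi_n\ge n/2>\varrho(A)$, i.e.\ $\xi_n<-\varrho(A)$, so only the endpoint case occurs. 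Hence $\max_{x\in[-\varrho(A),0]}err_n(x)=err_n(-\varrho(A))={\mathcal R}_n(-\varrho(A))-\exp(-\varrho(A))=\varepsilon$, and note $err_n>0$ on $]-\infty,0]$ so the absolute value is immaterial.

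Combining the two steps yields $\|\exp(A)-{\mathcal R}_n(A)\|_2\le\varepsilon$, which is the claim. I do not anticipate a genuine obstacle here: the argument is a clean two-part reduction, and both parts are essentially already available. The only point requiring a modicum of care is the boundary case $\varrho(A)=0$ (i.e.\ $A=0$), where the interval degenerates to $\{0\}$ and $\varepsilon={\mathcal R}_n(0)-\exp(0)=1-1=0$, consistent with $\exp(0)={\mathcal R}_n(0)=I$; and making sure the monotonicity conclusion from Proposition~\ref{pro:wxcv1} is invoked with the correct orientation (in the variable $x$, not $y=-x$). Everything else is a direct transcription of the spectral mapping property and the inequality~\eqref{ineq:loc}.
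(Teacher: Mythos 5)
Your proof is correct and follows exactly the route the paper intends (the theorem is stated "in view of Proposition~\ref{pro:wxcv1}" without a written-out proof): unitary diagonalization reduces the $2$-norm bound to $\max_i|err_n(\lambda_i)|$, and the hypothesis $n>2\varrho(A)$ together with the lower bound $-\xi_n\ge n/2$ in~\eqref{ineq:loc} places the maximizer $\xi_n$ to the left of $-\varrho(A)$, so the monotone decrease of $err_n$ on $[\xi_n,0]$ gives the endpoint value $\varepsilon=err_n(-\varrho(A))$ as the bound. No gaps; your handling of the orientation in $x$ versus $y=-x$ and of the degenerate case $\varrho(A)=0$ is fine.
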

\begin{remark}[Shifting method for nonnegative Hermitian matrices]
\label{rem:shift}
Since the spectrum of a real-valued matrix can be localized everywhere in the complex plane, we cannot guarantee that~\eqref{equ:error} holds in the general case. This problem can be solved by a shifting method in the case of Hermitian matrices. Let $A$ be 
an Hermitian matrix and $c\in{\mathbb R}$ a bound of its spectrum, $c\geq \alpha(A):=\max_i \lambda_i$. 
Since $Spec(A-cI)\subset \R^-$
, we can consider the approximation 
\[\exp(A) =e^{c}\exp(A-cI) \simeq e^{c}\,{\mathcal R}_n(A-c I).\]
But the term $e^c$ can be very large so that the approximation is only relevant for moderate values of $c$. 
However, we have
\[\frac{\|\exp(A) - e^{c}\,{\mathcal R}_n(A-c I)\|_2}{\|\exp(A)\|_2}\leq \frac{e^c}{\|\exp(A)\|_2} \|\exp(A-cI) -{\mathcal R}_n(A-c I)\|_2.\]
Assuming that $A$ is  Hermitian, we have $\|\exp(A)\|_2=e^{\alpha(A)}$, so that the relative error can be controlled in this case. Some numerical results about this strategy are presented in Section~\ref{Sec:Numres}.
\end{remark} 


Many applications require in practice to compute a matrix-vector product instead of assembling the full matrix. In such a case, given $v\in{\mathbb C}^d$,  $y=\exp(A)v$ is computed by 
\begin{equation}
\label{equ:expAx}
y  \simeq {\mathcal R}_n(A)v=\sum_{k=1}^na_k^{(n)}y_k^{(n)},
\end{equation}
with $y_k^{(n)}$ the solution to the linear system $(A+\theta_k^{(n)}I)y_k^{(n)}=v$. Each $y_k^{(n)}$ could be computed separately from the others leading to significant savings in computing time as illustrated by our numerical tests, see Section~\ref{Sec:Numres}.

\section{Floating-point arithmetic and numerical implementation}\label{Sec:Floating}
In this section, we examine the efficiency of the approximation 
\[\exp(x)\simeq \sum_{k=1}^n \frac{a_k^{(n)}}{x+\theta_k^{(n)}},\]
where $x$ is assume to be a real number. We decompose the error according to
\[\underbrace{\exp(x)-\sum_{k=1}^n \frac{a_k^{(n)}}{x+\theta_k^{(n)}}}_{e_1(x)}=\underbrace{
\left(\exp(x)-\frac{1}{\exp_n(-x)}\right)}_{e_2(x)} +\underbrace{\left(\frac{1}{\exp_n(-x)}-\sum_{k=1}^n \frac{a_k^{(n)}}{x+\theta_k^{(n)}}\right)}_{e_3(x)}.\]
\begin{figure}[ht]
\begin{center}
\includegraphics[width=6cm]{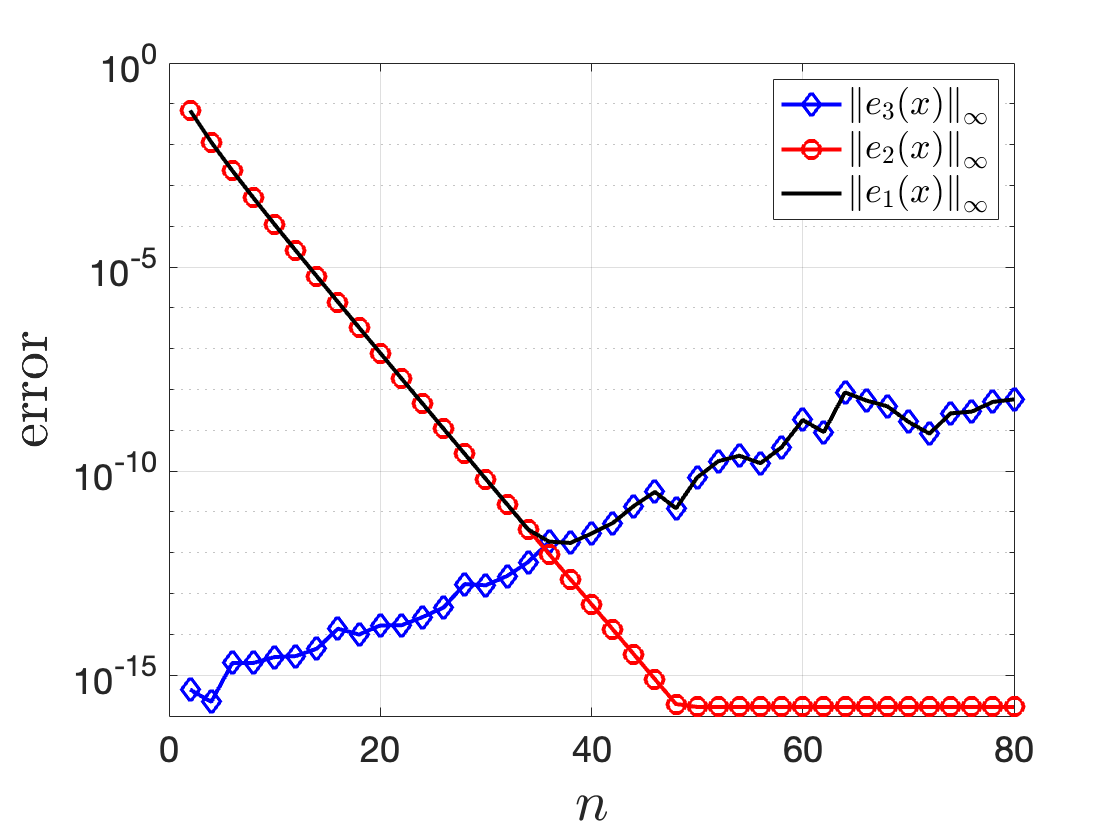} 
\includegraphics[width=6cm]{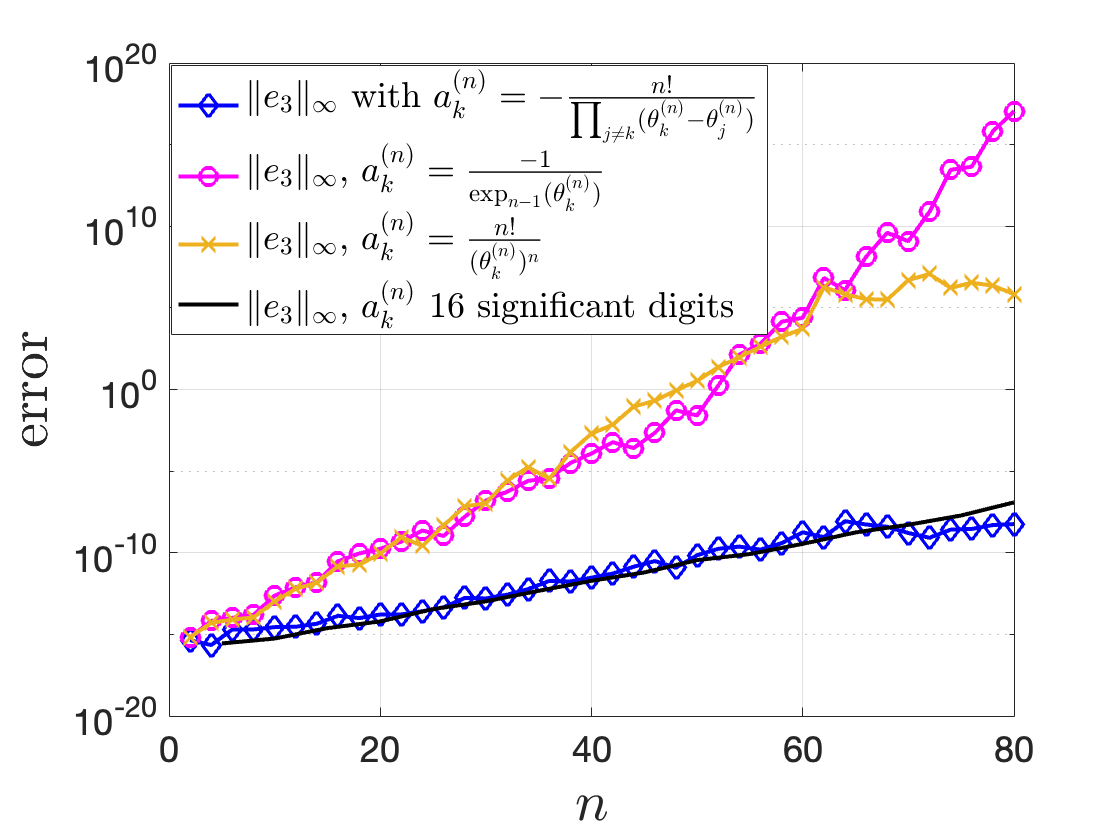} 
\end{center}
\caption{Left: uniform norm of $e_1$, $e_2$, and $e_3$ over $[-100,0]$  as a function of $n$. Right: norm of $e_3$ using various definitions of $a_k^{(n)}$, the black curve computation is done with $a_k^{(n)}$ computed with $32$ significant digits and truncated to $16$ significant digits.}
\label{fig:ineg_triang}
\end{figure}
The latter cancels 
in exact arithmetic. However, working for example in a finite precision of $16$ significant digits, we see 
on Figure \ref{fig:ineg_triang} (left panel) that 
in practise $e_1$ decreases until approximately $n = 34$ and then increases. This observation shows that our approximation is 
%
not relevant 
in practice for large values of $n$. 
This behavior can be explained by an analysis of $e_2$ and $e_3$. The former decreases 
as predicted by Proposition~\ref{pro:err}.
The latter increases with respect to $n$. 
The increase in 
$e_3$ is related to the partial fraction decomposition in floating-point arithmetic 
which is our framework hereafter, since we use MATLAB~\cite{MATLAB:R2021b} and Octave~\cite{Octave} with double precision. The accuracy actually 
deteriorates for larger values of $n$. Indeed, the three  equivalent definitions of the coefficients $a_k^{(n)}$ given by~\eqref{yoyo2} and~\eqref{yoyo234} lead in practise to different numerical results. The uniform norm of $e_3$ obtained with each of these definitions is shown on Figure \ref{fig:ineg_triang} (right panel). The formula given in~\eqref{yoyo2} gives the most precise result, which is actually very similar to the one obtained by keeping the exact $16$ first digits. Hence, we  use~\eqref{yoyo2}  in the sequel. 

In order to understand the 
influence of the floating-point arithmetic, we give in the next proposition 
a bound 
which guarantees 
a certain precision for a given $n$ when working with a floating-point arithmetic of $D$ significant digits. 
 

\begin{proposition}\label{prop:digit}
Denote by $\widetilde{a}_k^{(n)}$ and $\widetilde{\theta}_k^{(n)}$, the $D$-significant digits approximations of $a_k^{(n)}$ and $\theta_k^{(n)}$, and assume that
\begin{equation}\label{eq:condDn}
\gamma > n10^{(1-D)}  
\end{equation}
with $\gamma$ defined in (\ref{eq:ecart}). 
We have the following upper bound, for $x \in \mathbb{R}$ :
\begin{equation}\label{eq:upper_bound}
    \left| \frac{1}{\exp_n(-x)}-\sum_{k=1}^n \frac{\widetilde{a}_k^{(n)}}{\widetilde{\theta}_k^{(n)} + x}\right| \leq M_2(n,D):=\left(C_1(D) + C_2(n,D)\right)\sum_{k=1}^n \left|\widetilde{a}_k^{(n)}\right|,
\end{equation}
where
\[C_1(D) := \frac{2\cdot10^{(1-D)}}{\gamma\left(1 - 10^{(1-D)}\right)}, \quad C_2(n,D) := \frac{4n\cdot10^{(1-D)}}{\gamma(\gamma - n\cdot10^{(1-D)})}.\]
\end{proposition}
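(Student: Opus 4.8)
The goal is to bound the difference between the exact partial fraction sum $\sum_k a_k^{(n)}/(\theta_k^{(n)}+x)$, which equals $1/\exp_n(-x)$ by Proposition~\ref{pro:dec}, and its perturbed version $\sum_k \widetilde a_k^{(n)}/(\widetilde\theta_k^{(n)}+x)$. The natural strategy is a term-by-term triangle-inequality estimate, writing
\[
\frac{a_k^{(n)}}{\theta_k^{(n)}+x} - \frac{\widetilde a_k^{(n)}}{\widetilde\theta_k^{(n)}+x}
= \frac{a_k^{(n)}-\widetilde a_k^{(n)}}{\theta_k^{(n)}+x} + \widetilde a_k^{(n)}\left(\frac{1}{\theta_k^{(n)}+x}-\frac{1}{\widetilde\theta_k^{(n)}+x}\right),
\]
and controlling each of the two contributions separately: the first accounts for the error in the residues ($C_1$ term), the second for the error in the poles ($C_2$ term). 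One then sums over $k$ and factors out $\sum_k|\widetilde a_k^{(n)}|$.

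First I would set up the basic per-digit bounds: by definition of $D$-significant-digit truncation, $|a_k^{(n)}-\widetilde a_k^{(n)}|\le |a_k^{(n)}|\,10^{(1-D)}$ and $|\theta_k^{(n)}-\widetilde\theta_k^{(n)}|\le |\theta_k^{(n)}|\,10^{(1-D)}\le n\,10^{(1-D)}$, the latter using the upper bound $|\theta_k^{(n)}|\le n$ from~\eqref{equ:racn}. A key geometric fact I would establish is a lower bound on $|\theta_k^{(n)}+x|$ for real $x$: since the roots come in conjugate pairs and $n$ is even, for each $k$ there is a conjugate root $\theta_j^{(n)}=\overline{\theta_k^{(n)}}$, so $|\mathrm{Im}\,\theta_k^{(n)}|=\tfrac12|\theta_k^{(n)}-\theta_j^{(n)}|\ge \gamma/2$ by~\eqref{eq:ecart}; hence $|\theta_k^{(n)}+x|\ge|\mathrm{Im}(\theta_k^{(n)}+x)|=|\mathrm{Im}\,\theta_k^{(n)}|\ge\gamma/2$ for all real $x$. (I should double-check whether the intended bound is $\gamma/2$ or $\gamma$; the constants $C_1,C_2$ as written suggest $\gamma/2$ for the perturbed denominator, so I would also need $|\widetilde\theta_k^{(n)}+x|\ge |\theta_k^{(n)}+x| - |\theta_k^{(n)}-\widetilde\theta_k^{(n)}|\ge \gamma/2 - n\,10^{(1-D)}\cdot\tfrac12$, or a variant — reconciling the exact form of the constants with these estimates is the bookkeeping core of the argument.)

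With these in hand, the first term is bounded by $|a_k^{(n)}|10^{(1-D)}/(\gamma/2)$; after replacing $|a_k^{(n)}|$ by $|\widetilde a_k^{(n)}|/(1-10^{(1-D)})$ this yields the $C_1(D)$ contribution. For the second term I would write $\tfrac{1}{\theta_k^{(n)}+x}-\tfrac{1}{\widetilde\theta_k^{(n)}+x} = \tfrac{\widetilde\theta_k^{(n)}-\theta_k^{(n)}}{(\theta_k^{(n)}+x)(\widetilde\theta_k^{(n)}+x)}$, bound the numerator by $n\,10^{(1-D)}$, one denominator factor by $\gamma/2$ and the other by $\gamma/2 - n\,10^{(1-D)}/2$ (using condition~\eqref{eq:condDn} to keep this positive), giving the $C_2(n,D)$ contribution; summing over $k$ produces~\eqref{eq:upper_bound}. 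The main obstacle is not conceptual but careful: matching the estimates exactly to the stated constants $C_1$ and $C_2$ — tracking whether the relevant lower bound on denominators is $\gamma/2$ or $\gamma$, exactly how the perturbed denominator $|\widetilde\theta_k^{(n)}+x|$ is bounded below, and how condition~\eqref{eq:condDn} enters to guarantee positivity of $\gamma-n\,10^{(1-D)}$ — so that the triangle-inequality pieces assemble into precisely the claimed $M_2(n,D)$.
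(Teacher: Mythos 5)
Your plan is essentially the paper's proof: the same splitting into a residue-error term and a pole-error term, the same relative-error model giving $|a_k^{(n)}|\le|\widetilde a_k^{(n)}|/(1-10^{(1-D)})$ and hence $C_1(D)$, and the same use of $|\theta_k^{(n)}|\le n$ from~\eqref{equ:racn} together with $|\theta_k^{(n)}+x|\ge|\mathrm{Im}\,\theta_k^{(n)}|\ge\gamma/2$ (conjugate pairs plus~\eqref{eq:ecart}) for the pole term, everything summed against $\sum_k|\widetilde a_k^{(n)}|$. The one step you flag, the lower bound on $|\widetilde\theta_k^{(n)}+x|$, is handled in the paper by $|\widetilde\theta_k^{(n)}+x|\ge|\mathrm{Im}\,\widetilde\theta_k^{(n)}|\ge(\gamma-n\cdot10^{(1-D)})/2$ — exactly the value the constant $C_2(n,D)$ requires — and the paper's justification of that inequality is no sharper than your triangle-inequality variant, so your bookkeeping worry lands precisely on the only loose point of the published argument.
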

Note 
that the condition given in~\eqref{eq:condDn} 
is not restrictive: it holds for example in the case of $16$ significant digits even in the case where $n \approx 10^{10}$.

This result is illustrated in Figure~\ref{fig:majorant_chiffsign}. 
We see in this example that with $16$ significant digits, the bound obtained in~\eqref{eq:upper_bound} implies that working with $n=30$  guarantees an error of order 
 $10^{-8}$ and get an actual order of $10^{-10}$. We see however that to increase the accuracy, we could work up to $n\approx 36$. 
\begin{figure}[ht]
\begin{center}
\includegraphics[width=6cm]{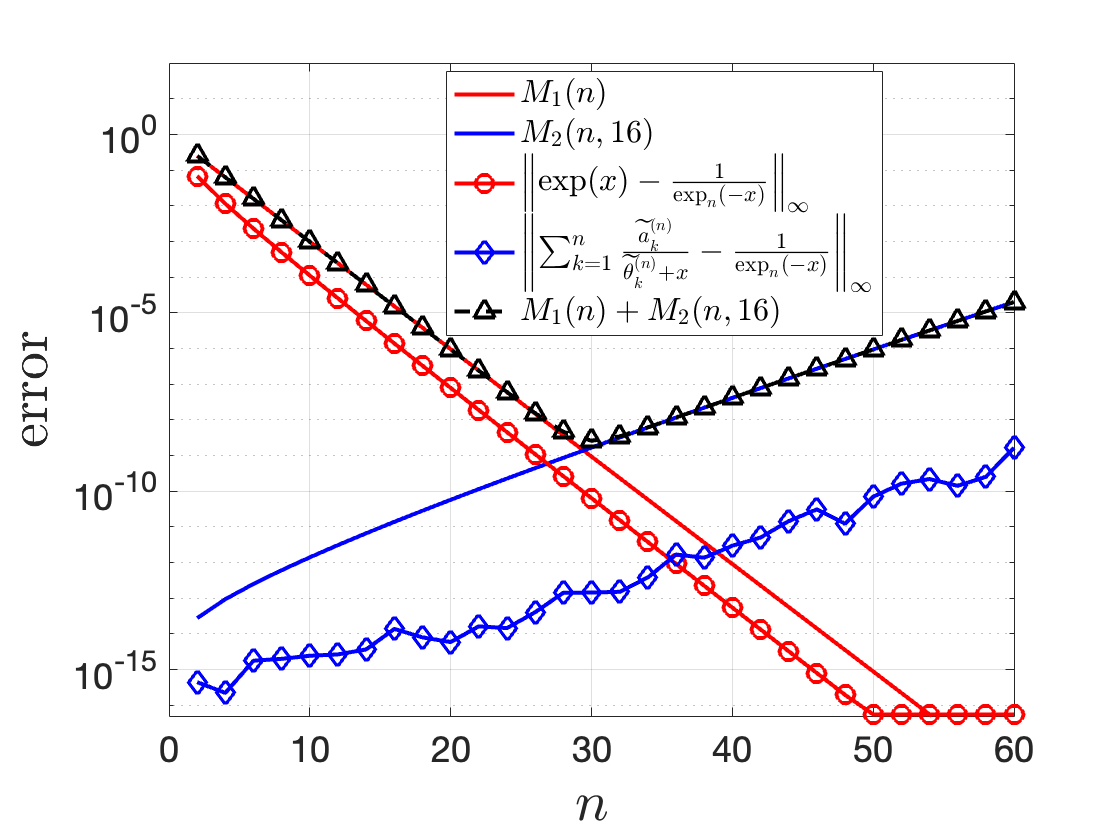}
\end{center}
\caption{Uniform norm of $e_2$ and $e_3$ over $[-100,0]$, in $16$ significant digits, and upper bounds : $M_1(n) := 1/2^n$ (see Proposition~\ref{pro:err}) and $M_2(n,D=16)$ given by (\ref{eq:upper_bound}).}
\label{fig:majorant_chiffsign}
\end{figure}
\begin{proof}
(Proposition~\ref{prop:digit})
Since we are dealing with numerical approximations based on $D$ significant digits, we 
consider the first  $D$ digits of $\widetilde{a}^{(n)}_k$ and $\widetilde{\theta}^{(n)}_k$ to be exact.
Then, for any complex number $z$ and its approximation $\widetilde{z}$ 
we have:
\begin{equation}\label{eq:digit_approx}
    \widetilde{z} = z(1 + \varepsilon_z), \quad \left|\varepsilon_z\right| \in \left[10^{-D}, 10^{(1-D)}\right].
\end{equation}
Writing $r^{(n)}_k(x) = \frac{1}{\theta^{(n)}_k + x}$ and $\widetilde{r}^{(n)}_k(x)= \frac{1}{\widetilde{\theta}^{(n)}_k + x}$, we see that finding an upper bound for the left side of (\ref{eq:upper_bound}) amounts to finding 
an upper bound for:
\begin{equation*}
    \begin{split}
    \sum_{k=1}^n r^{(n)}_k(x)\left(\widetilde{a}_k^{(n)} - a_k^{(n)}\right) 
   + \widetilde{a}_k^{(n)}\left(\widetilde{r}^{(n)}_k(x) - r^{(n)}_k(x)\right)
   \\ =  
    \sum_{k=1}^n r^{(n)}_k(x) a_k^{(n)} \left( \frac{\varepsilon_{a_k^{(n)}}}{1 + \varepsilon_{a_k^{(n)}}} \right)
     + 
    \sum_{k=1}^n \widetilde{a}_k^{(n)}\left(\widetilde{r}^{(n)}_k(x) - r^{(n)}_k(x)\right),
\end{split}
\end{equation*}
where the equality follows from (\ref{eq:digit_approx}).
Combining~\eqref{equ:racn} with~\eqref{eq:digit_approx}, we get $|\widetilde{\theta}_k^{(n)} - \theta_k^{(n)}| = |\varepsilon_{\theta_k^{(n)}}\theta_k^{(n)}| \leq n\cdot 10^{(1-D)} $. 
Combining~\eqref{eq:ecart} with the fact that
for $n$ even, $\theta^{(n)}_k$ are strictly not real, we obtain that 
$\left|\theta_k^{(n)} + x\right| \geq \left|\mathcal{I}m(\theta_k^{(n)})\right| \geq \frac{\gamma}{2}$
when $x \in \mathbb{R}$. 
As a consequence $\left|r^{(n)}_k(x)\right| \leq \frac{2}{\gamma}$ for all $x \in \mathbb{R}$. 
In the same manner, we can see that 
\[
\left|\widetilde{\theta}_k^{(n)} + x\right| \geq \left|\mathcal{I}m(\widetilde{\theta}_k^{(n)})\right| \geq \left||\mathcal{I}m({\theta_k^{(n)}})| - |\mathcal{I}m(\varepsilon_{\theta_k^{(n)}}{\theta_k^{(n)}})|\right| \geq \frac{\left|\gamma - n10^{(1-D)}\right|}{2}
\]
which follows from~\eqref{eq:condDn}. Consequently, $\left|\widetilde{r}^{(n)}_k(x)\right| \leq \frac{2}{\gamma - n10^{(1-D)}}$ for all $x \in \mathbb{R}$.

Finally, we have $|\widetilde{r}^{(n)}_k(x) - r^{(n)}_k(x)| = |\widetilde{r}^{(n)}_k(x)||r^{(n)}_k(x)||\widetilde{\theta}_k^{(n)} - \theta_k^{(n)}|$ so that $|\widetilde{r}^{(n)}_k(x) - r^{(n)}_k(x)| \leq \frac{4n\cdot 10^{(1-D)}}{\gamma(\gamma - n10^{(1-D)})}$. 
Combining all theses inequalities with 
$|\varepsilon_{a_k^{(n)}}| \leq 10^{(1-D)}$, and $\left|\frac{\varepsilon_{a_k^{(n)}}}{1 + \varepsilon_{a_k^{(n)}}}\right| \leq \frac{10^{(1-D)}}{1 - 10^{(1-D)}}$, we get (\ref{eq:upper_bound}).
\end{proof}

The graphs of $e_1(x)$ for $x=-10$ obtained with various number of significant digits is given in Figure  \ref{fig:maple}. 
We see that the larger the number of significant digits, the later $e_1$ starts increasing. 
It follows that floating-point arithmetic precision must be adapted to $n$ which is in practise the number of processor used in the computation. 

\begin{figure}[ht]
\begin{center}
\includegraphics[width=6cm]{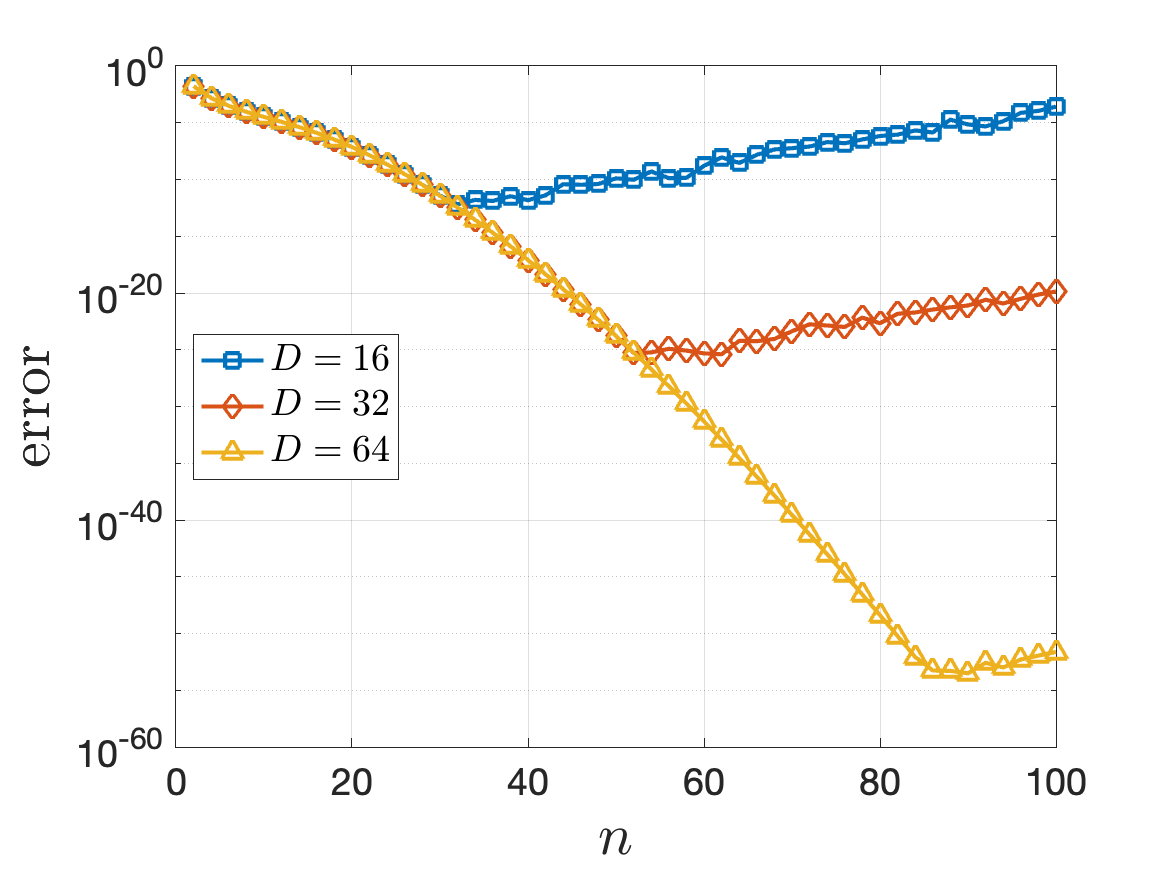}
\end{center}
\caption{Error ($e_1$) on the approximation of $\exp(x)$ vs $n$ ($x = -10$) computed with Maple, with $16$, $32$ and $64$ significant digits, respectively.}
\label{fig:maple}
\end{figure}  

\section{Numerical efficiency}\label{Sec:Numres} 

In this section, we test the performance of our method on MATLAB and Octave, and compare it with several other algorithms: the \texttt{expm} functions available in these softwares, and a rational Krylov method. Recall that \texttt{expm} is based on the combination of a Padé approximation with a scaling and squaring technique. MATLAB uses the variant described in~\cite{highS} and~\cite{high} whereas Octave uses the variant described in~\cite{ward}.
\begin{remark}\label{rem:two_time_real_part}
If $n$ is even and $x \in \mathbb{R}$, we can compute twice as fast ${\mathcal R}_n(x)$. Indeed, the complex numbers $\theta^{(n)}_k$ are in this case a set of conjugate pairs as well as $a^{(n)}_k$, and $\frac{1}{\theta^{(n)}_k + x}$. 
Assuming that the labelling is such  that $\theta^{(n)}_{2\ell+1} = \overline{\theta^{(n)}_{2\ell}}$, we get
\begin{equation*}
    \sum_{k=1}^n\frac{a_k^{(n)}}{x+\theta_k^{(n)}} = \sum_{\ell=1}^{n/2}
    2\mathcal{R}e\left(\frac{a_{2\ell}^{(n)}}{x+\theta_{2\ell}^{(n)}}\right).
\end{equation*}
It follows that the number of computations can be divided by two. The same holds for the computation of ${\mathcal R}_n(A)$ when $Spec(A)$ is real, e.g., in the Hermitian case that we consider in this paper. 
\end{remark}
\subsection{Setting}
Because of the results obtained in Section~\ref{Sec:Floating}, we consider cases where $n\leq 32$ so that floating arithmetic does not affect our results. All computational times are measured thanks to the MATLAB/Octave \texttt{tic} / \texttt{toc} functions.
The simulated parallel computational times for our method are estimated as follows: $t_{para} = \max_{1\leq i \leq n}{t_i}$ where $t_i$ is the computational time of the $i$-th matrix inversion or linear system resolution. We denote by $t_{seq}$ the sequential computing time of the approximation via \texttt{expm}
and by $t_{Krylov}$ 
the time taken by a rational Krylov approximation to get  the same absolute error as the one of our method using $n=32$.

We first show that the error of our method does not depend on the 
dimension of the matrix 
by considering a symmetric $d\times d$ real matrix $A$, with spectrum randomly chosen within a fixed range.
We then compare with the above mentioned algorithms, using matrices $B = \Delta^{1}_d\in{\cal M}_d({\mathbb R})$ and $C = \Delta^{2}_d\in{\cal M}_d({\mathbb R})$ corresponding respectively to the usual Finite Difference discretization of the Laplace operator in one and two dimensions. We consider both approximations $\exp(\cdot)$ and $\exp(\cdot)v$ where $v$ is a random vector of size $d$ and norm $1$.

\subsection{Stability of the error with respect to the dimension}
Given a matrix $A$, we consider either the absolute error $\|\exp(A) - \mathcal{R}_n(A)\|_2$ or the relative error $\frac{\|\exp(A) - \mathcal{R}_n(A)\|_2}{\|\exp(A)\|_2}$ when the spectrum of $A$ is non-positive or include positive eigenvalues, respectively. This choice follows from the fact that relative error is relevant for large numbers whereas small numbers are correctly analysed with absolute error. Both cases occur when using $\exp$.

As a first example, we consider the matrix $A$ described previously. 
The absolute error $\|\exp(A) - \mathcal{R}_n(A)\|_2$ as a function of the dimension $d$ is represented in Figure \ref{fig:matrix1} (left panel). 
The results are smoothed by using the mean of the error for various random spectra included in $[-1,0]$. We use the approximation~\eqref{equ:ExpoMatRat} where the inverse matrix is computed using the functions \texttt{inv} of MATLAB and Octave. We note that the error does not depend on $d$, which is an expected result since the spectrum remains in a fixed interval.
\begin{figure}[ht]
\begin{center}
\includegraphics[width=.46\textwidth]{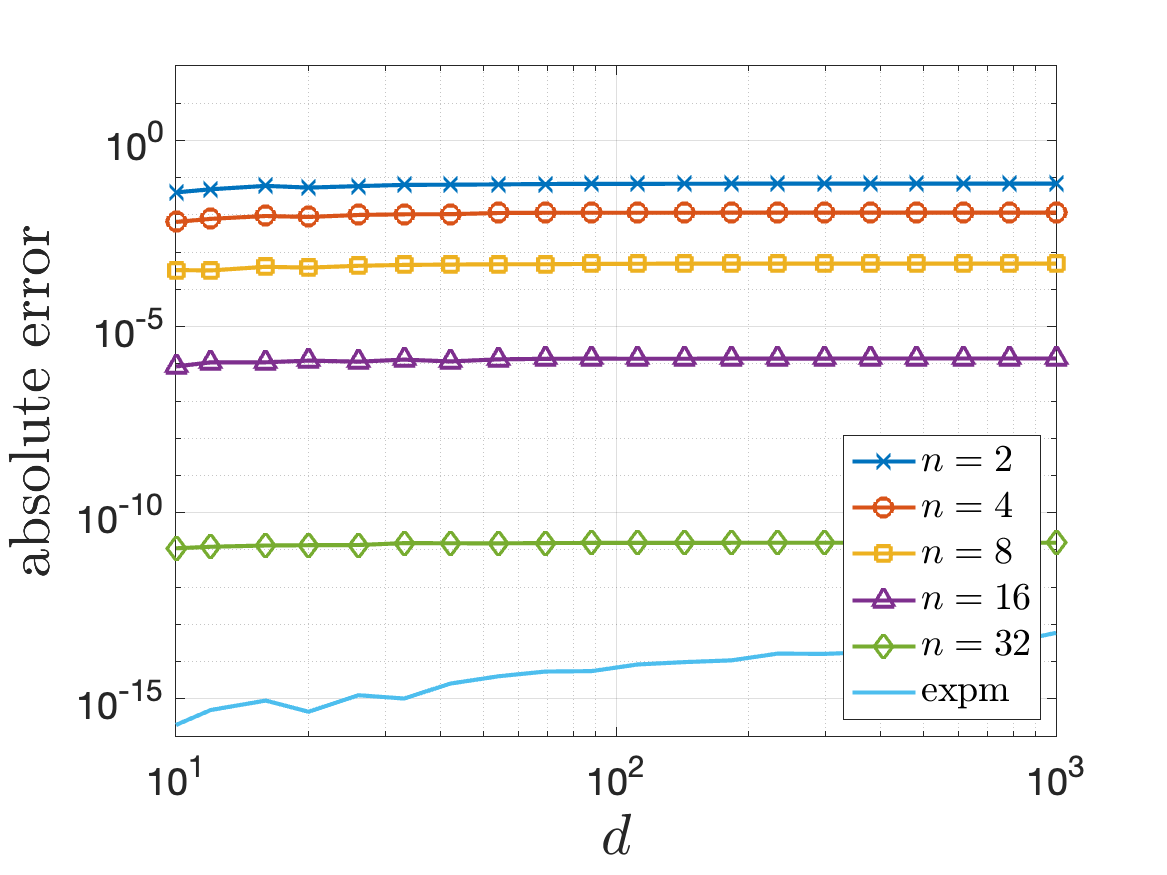} 
\includegraphics[width=.46\textwidth]{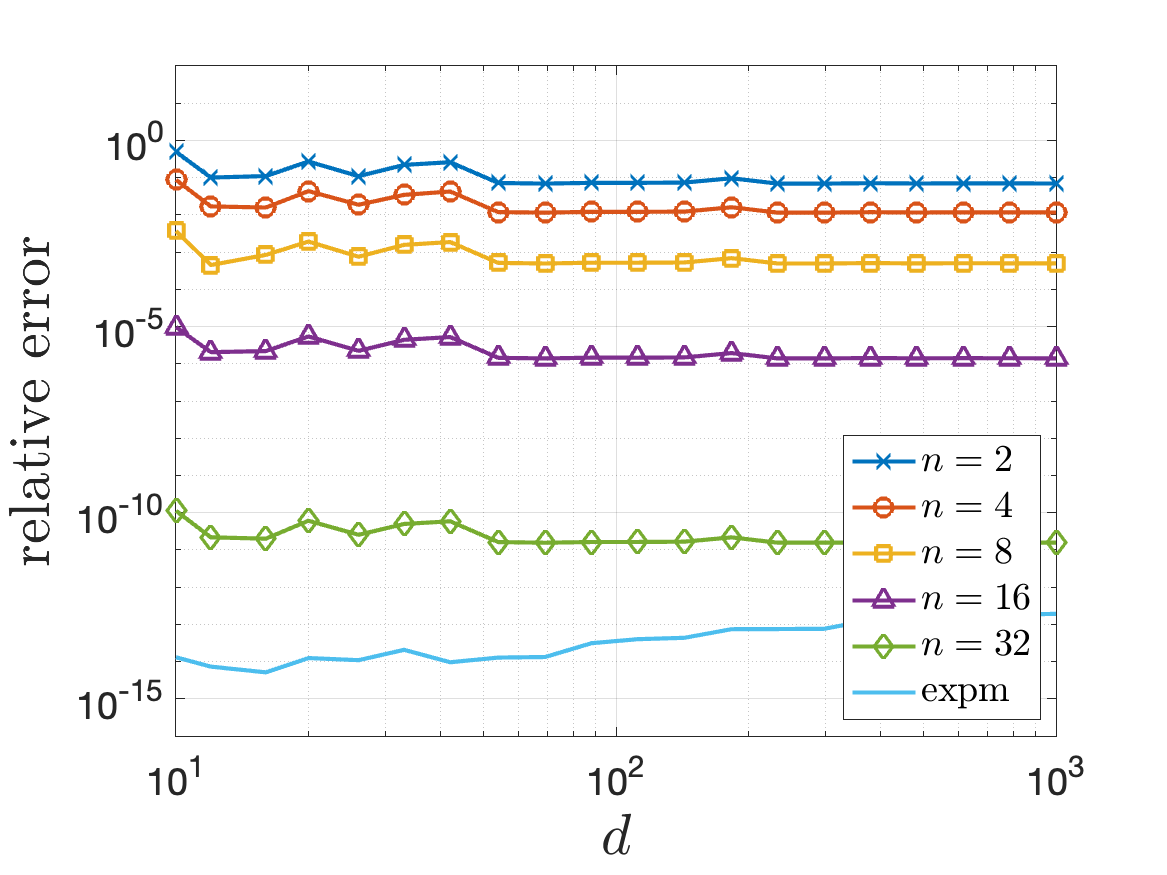} 
\end{center}
\caption{Error versus the  dimension of the matrix. Left: absolute error, matrices with negative spectra.  Right: relative error, matrices with positive spectra using the shift method from Remark \ref{rem:shift}.}
\label{fig:matrix1}
\end{figure}
In a second example, we consider a matrix with positive spectrum included in $[0,20]$. We use the shift method presented in Remark~\ref{rem:shift} to compute the exponential, and the relative error $\frac{\|\exp(A) - \mathcal{R}_n(A)\|_2}{\|\exp(A)\|_2}$. We see that the error still  not depends on the dimension of the problem. 
\subsection{Computation of $\exp(\cdot)$ for the Laplace operator}
From now on, we focus on matrices $B$ and $C$ described previously.

The absolute error is computed in practise by $\|\texttt{expm}(\cdot) - \mathcal{R}_n(\cdot)\|_2$. The results are presented in Figure \ref{fig:matrix3_1D}  (bottom) for $B$ and in Figure \ref{fig:matrix3_2D} for $C$ (bottom).
Here again, the error does not depend on the dimension of the problem, but only on the degree of truncation $n$. Note that increasing $n$ only expands the spectrum of these matrices on the left side, hence does not deteriorate our approximations.


Next, we compare the computing times $t_{seq}$ and $t_{para}$. The results are presented in Figure \ref{fig:matrix3_1D} for $B$ and in Figure \ref{fig:matrix3_2D} for $C$. 
In these tests, the matrices $(B+\theta_k^{(n)}I)^{-1}$ are computed with the MATLAB and Octave functions \texttt{inv} in parallel 
and, as explained previously $t_{para}$ is defined as the maximum time taken to compute one of the $a_k^{(n)} (A+\theta_k^{(n)}I)^{-1}$.
We can see that $t_{para}$ is slightly larger than $t_{seq}$ in the case of MATLAB 
and almost ten times larger in the case of Octave.

 \begin{figure}[ht]
\begin{center}
\includegraphics[width=6cm]{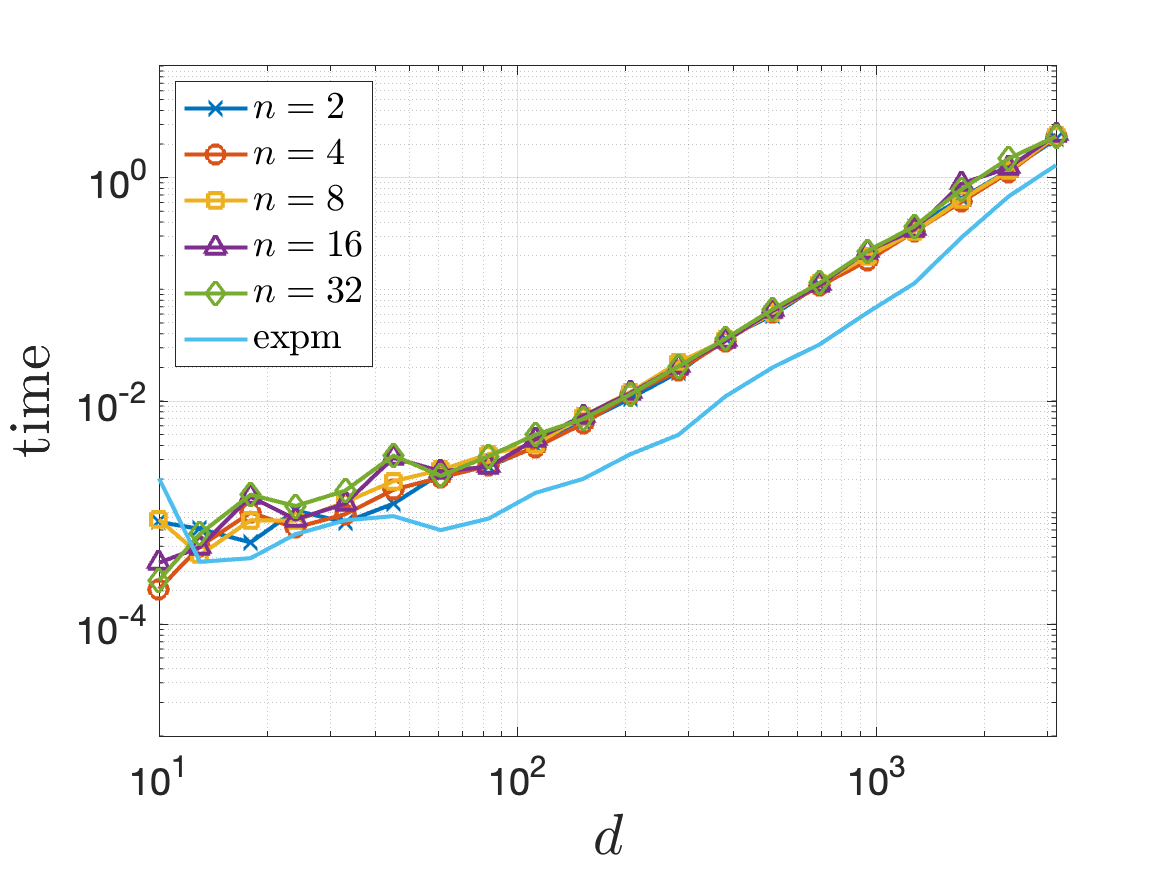}  
\includegraphics[width=6cm]{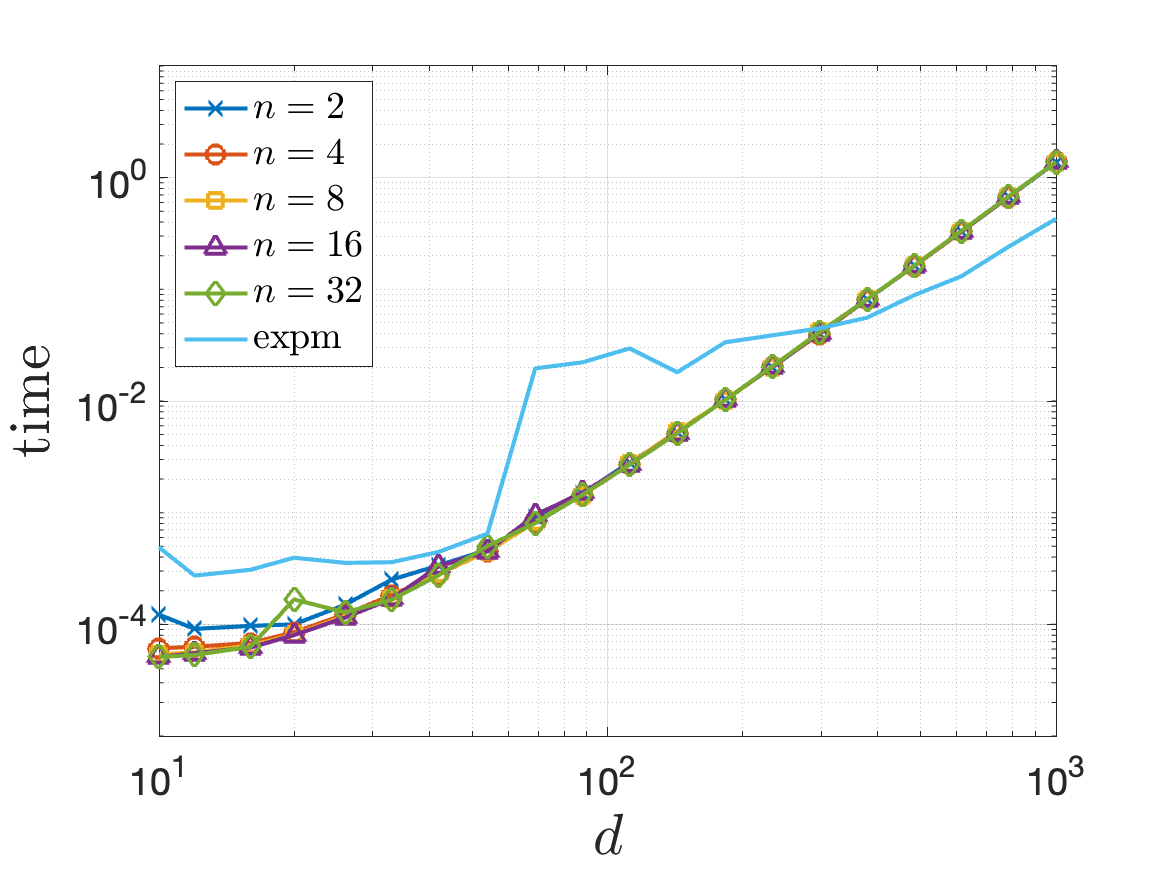}  
\includegraphics[width=6cm]{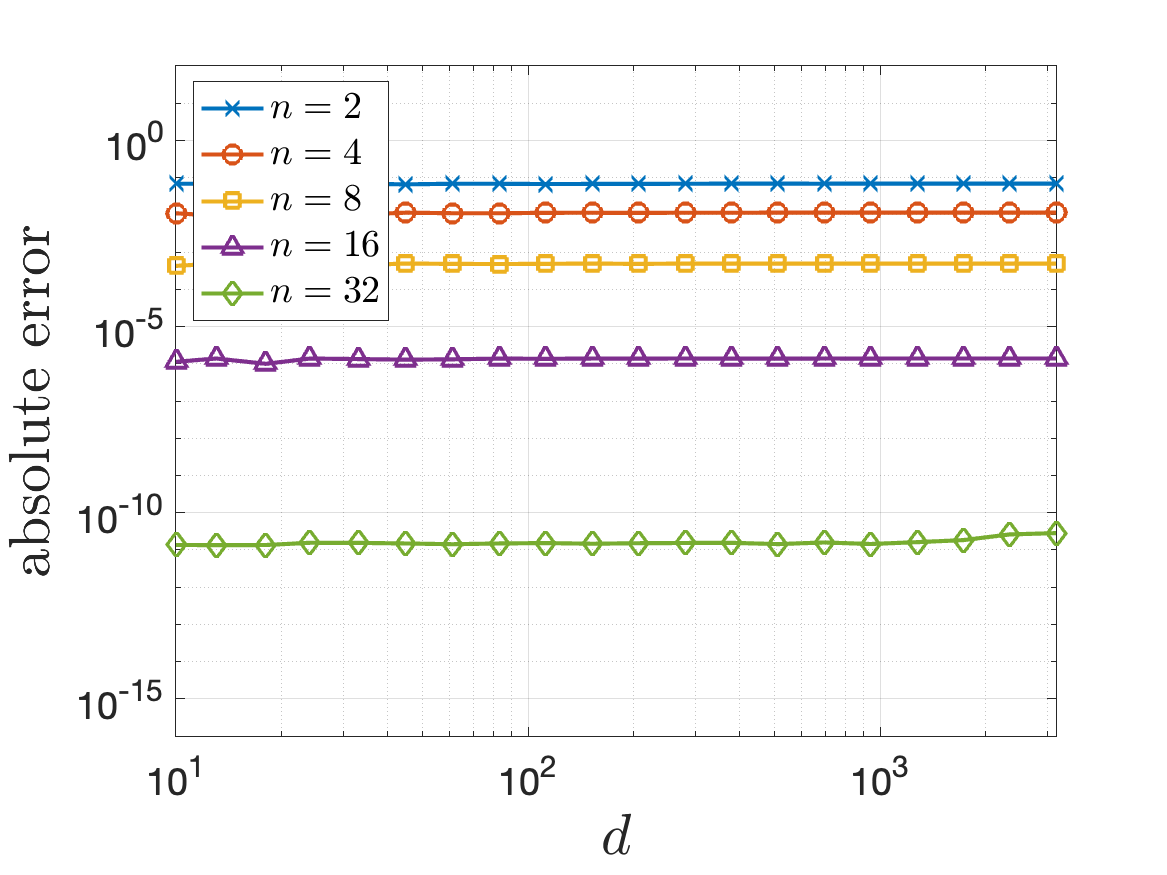}
\end{center}
\caption{Performance of the approximation of $\exp(B)$ for matrix $B = \Delta^{1}_d$.
Top:  CPU time required to compute \texttt{expm} and  approximation ${\cal R}_n$ for various values of $n$ using MATLAB (left) and  Octave (right).
Bottom: relative error in the computation of $\exp(B)$ as a function of $d$.}
\label{fig:matrix3_1D}
\end{figure}

 \begin{figure}[ht]
\begin{center}
\includegraphics[width=6cm]{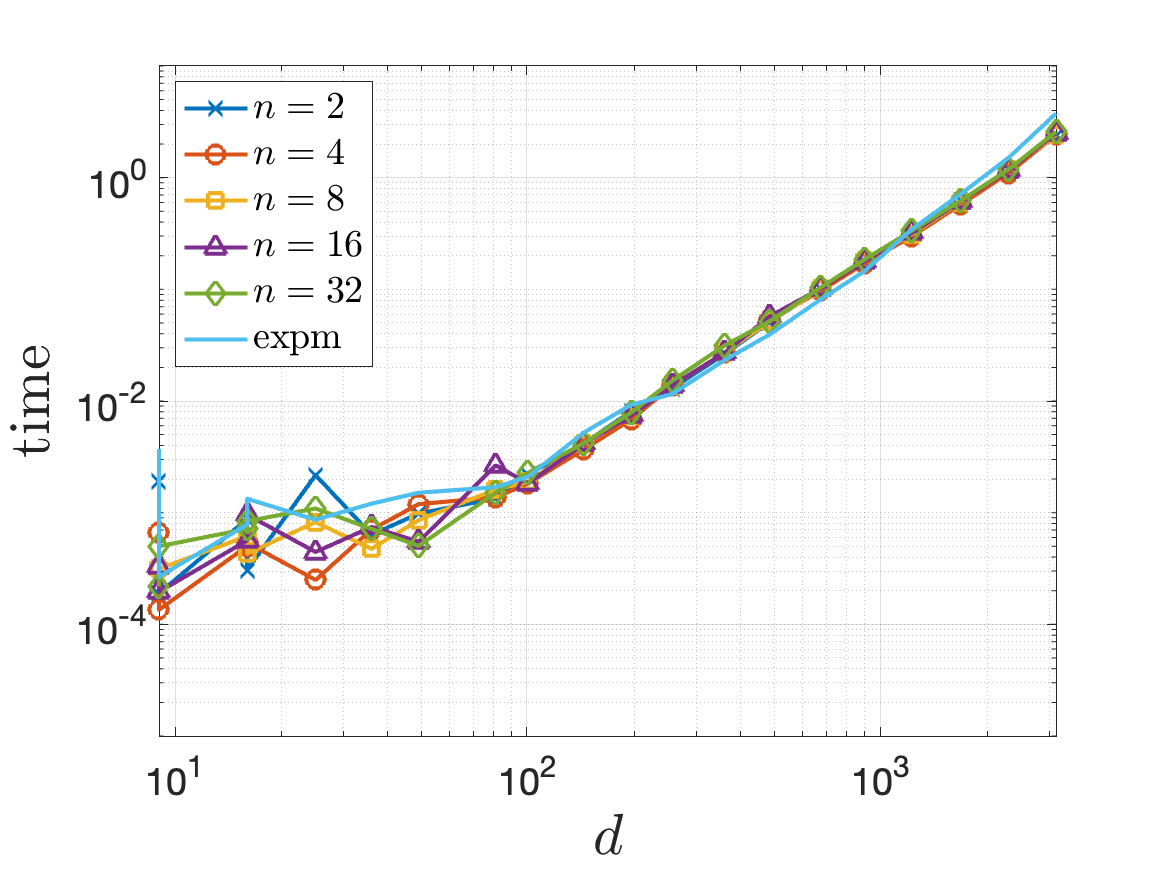}  
\includegraphics[width=6cm]{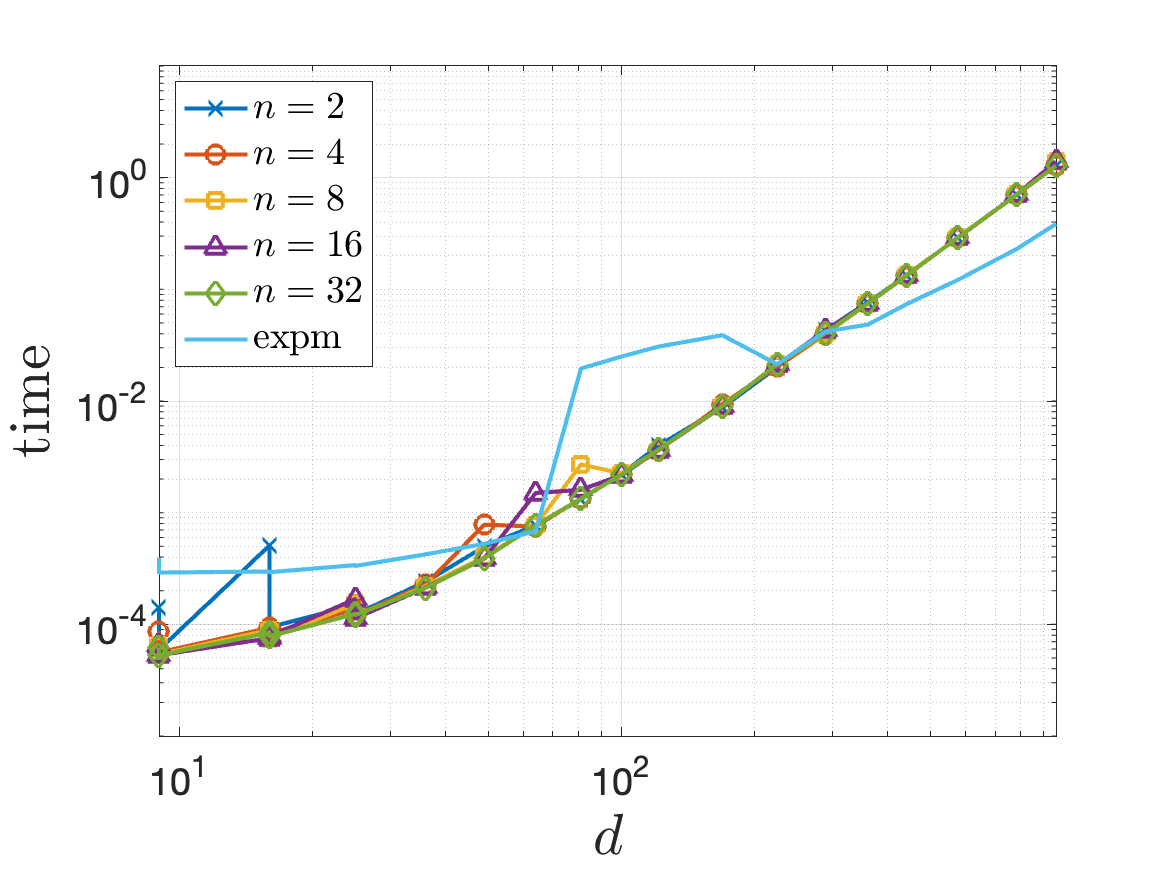}  
\includegraphics[width=6cm]{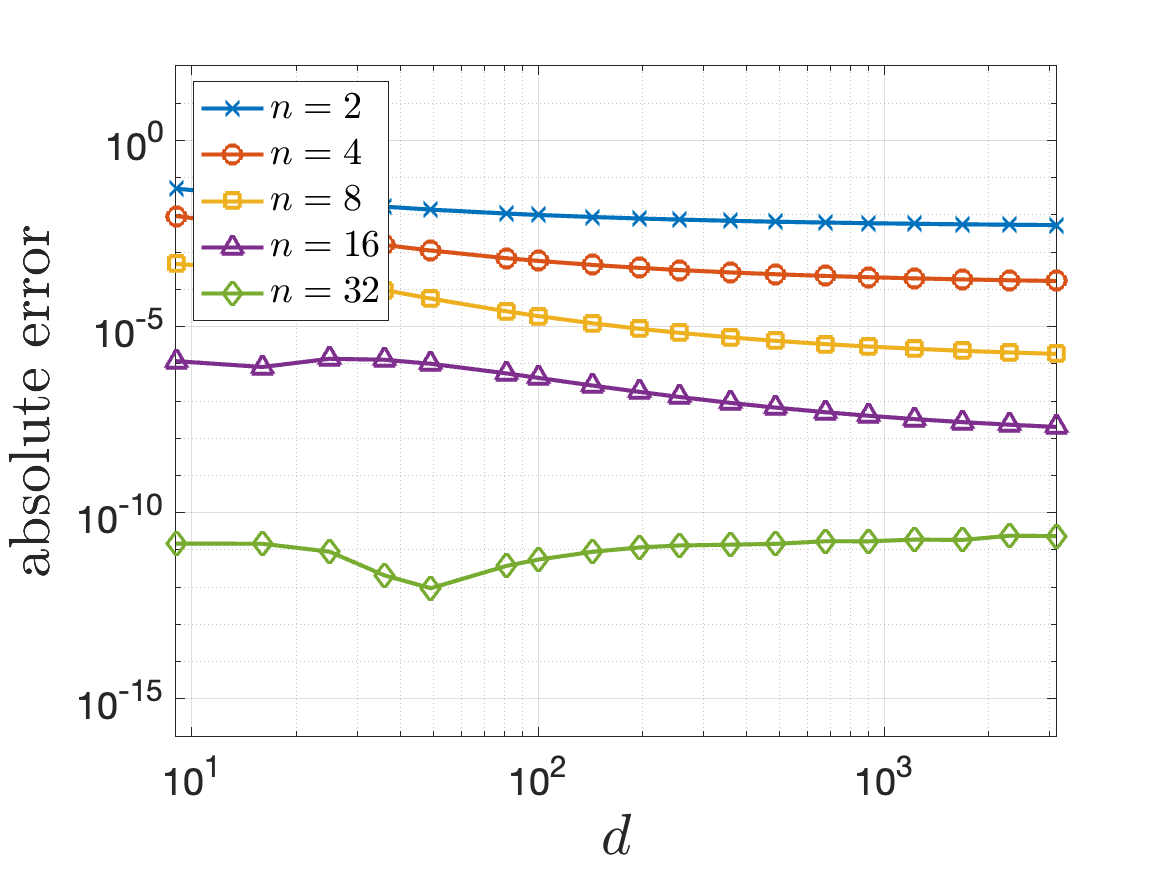}
\end{center}
\caption{Same as Figure~\ref{fig:matrix3_1D} for matrix $C = \Delta^{2}_d$.
}
\label{fig:matrix3_2D}
\end{figure}

\subsection{Computation of $\exp(\cdot)v$ for the Laplace operator}

Still considering matrices $B$ and $C$, we finally consider the action of the matrix exponential on vectors. For $v \in \mathbb{R}^d$, the vector $w = \exp(B) v$ is approximated by (\ref{equ:expAx}) where $(B+\theta_k^{(n)}I)y_k^{(n)}=v$ is solved using the solvers \texttt{mldivide} of MATLAB and Octave. We evaluate the mean of the error and the mean of $t_{para}$ for a series of random vectors $v$ of norm $1$.

Rational Krylov methods being the state of the art for this type of computation, we first look at the method proposed in \cite{guttel} to compare it with our approximation. All our  results related to rational Krylov methods are obtained using Guettel's toolbox \cite{guettel_2020_toolbox}, with parameter $\xi = -1$ as in \cite{guttel}.
Figure~\ref{fig:error_vs_time} shows the absolute error of the two methods, as a function of the computational time. 
We see that in both cases, for a prescribed accuracy, our method outperforms rational Krylov method.

\begin{figure}[ht]
\begin{center}
\includegraphics[width=.46\textwidth]{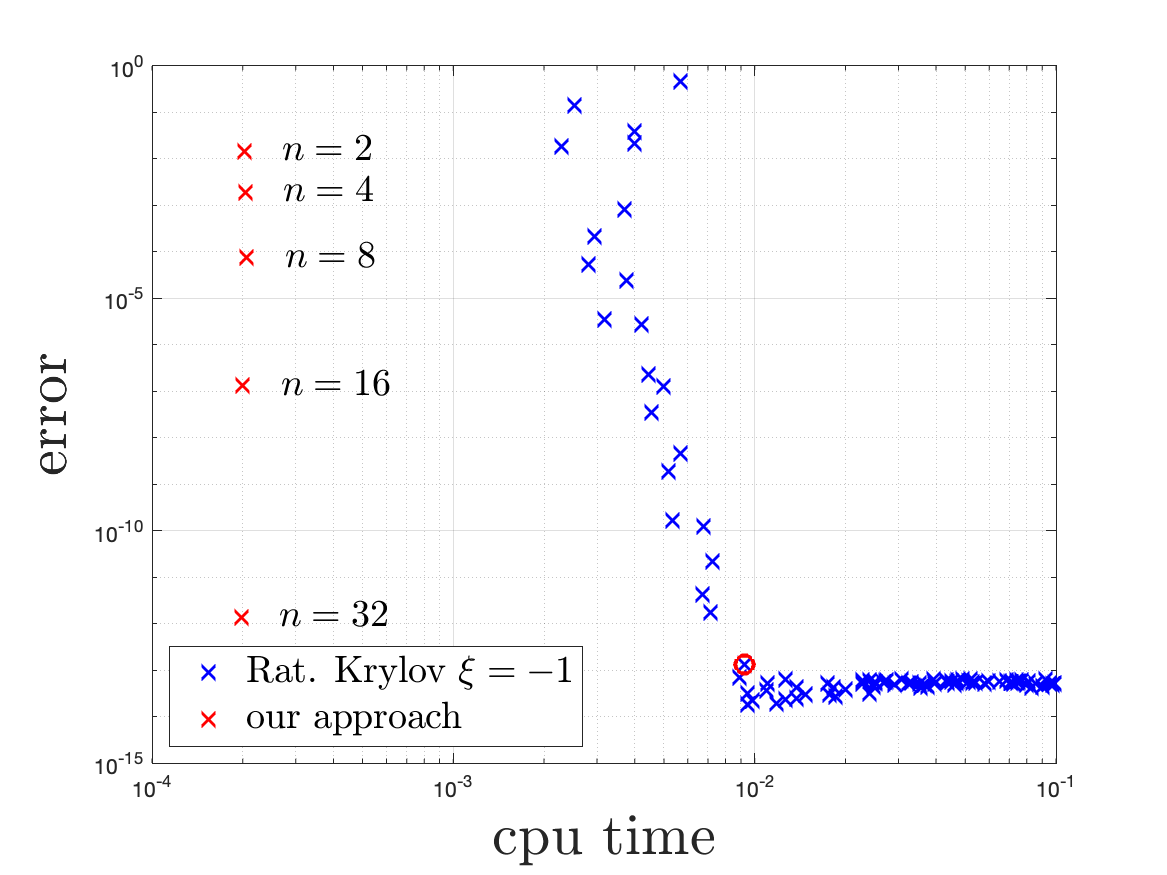} 
\includegraphics[width=.46\textwidth]{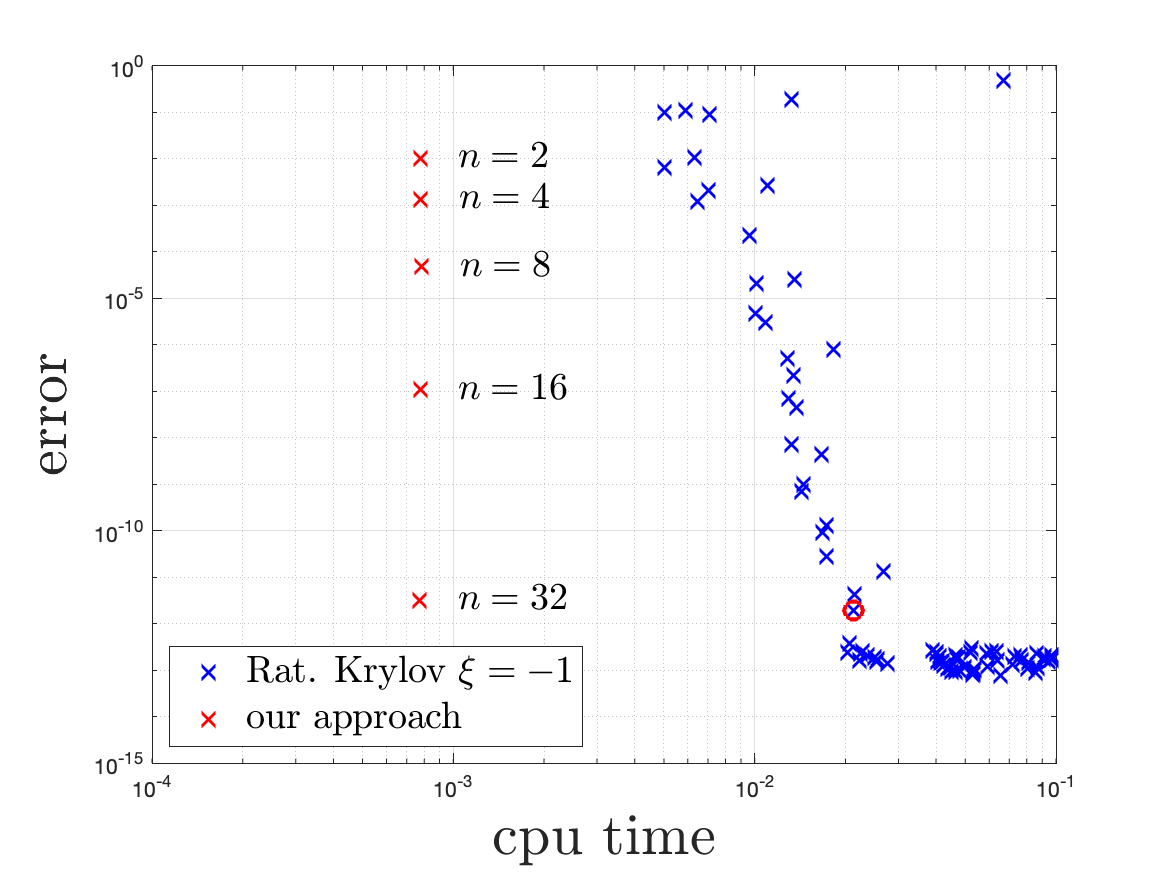} 
\end{center}
\caption{Error and CPU time comparison of the approximation $\exp(B)v$  using rational Krylov method (blue) and our approach (red). Vector $v$ is randomly chosen,  $d = 100$ (left) and $d = 400$ (right). Each point of the cloud corresponds to a Krylov rational space of size $m$ with $m \in \{1,...,d-1\}$. The dimension  $m$ for which Krylov method is at least as precise as our method with $n = 32$  is circled in red.}
\label{fig:error_vs_time}
\end{figure}

 The absolute error $\|\texttt{expm}(B)v - \mathcal{R}_n(B)v\|_2$  together with the computing times  $t_{seq}$, $t_{Krylov}$ and $t_{para}$ are shown in 
 Figure~\ref{fig:matrixexpAb_1D}.
 In this case $t_{para}$ is defined as the maximum time used to compute one of the vectors $a_k^{(n)} y_k^{(n)}$ (see~\eqref{equ:expAx}). 
We note that $t_{seq}$ and $t_{Krylov}$ are larger than $t_{para}$ for all values of the dimension $d$ of the matrix, with MATLAB as well as Octave. For example, with $d = 10^3$, $t_{seq} \approx 10^2 t_{para}$ and $t_{seq} \approx 10^4 t_{para}$ with MATLAB and Octave, respectively, and $t_{Krylov} \approx 10 t_{para}$ and $t_{Krylov} \approx 10^3 t_{para}$.

\begin{figure}[ht]
\begin{center}
\includegraphics[width=6cm]{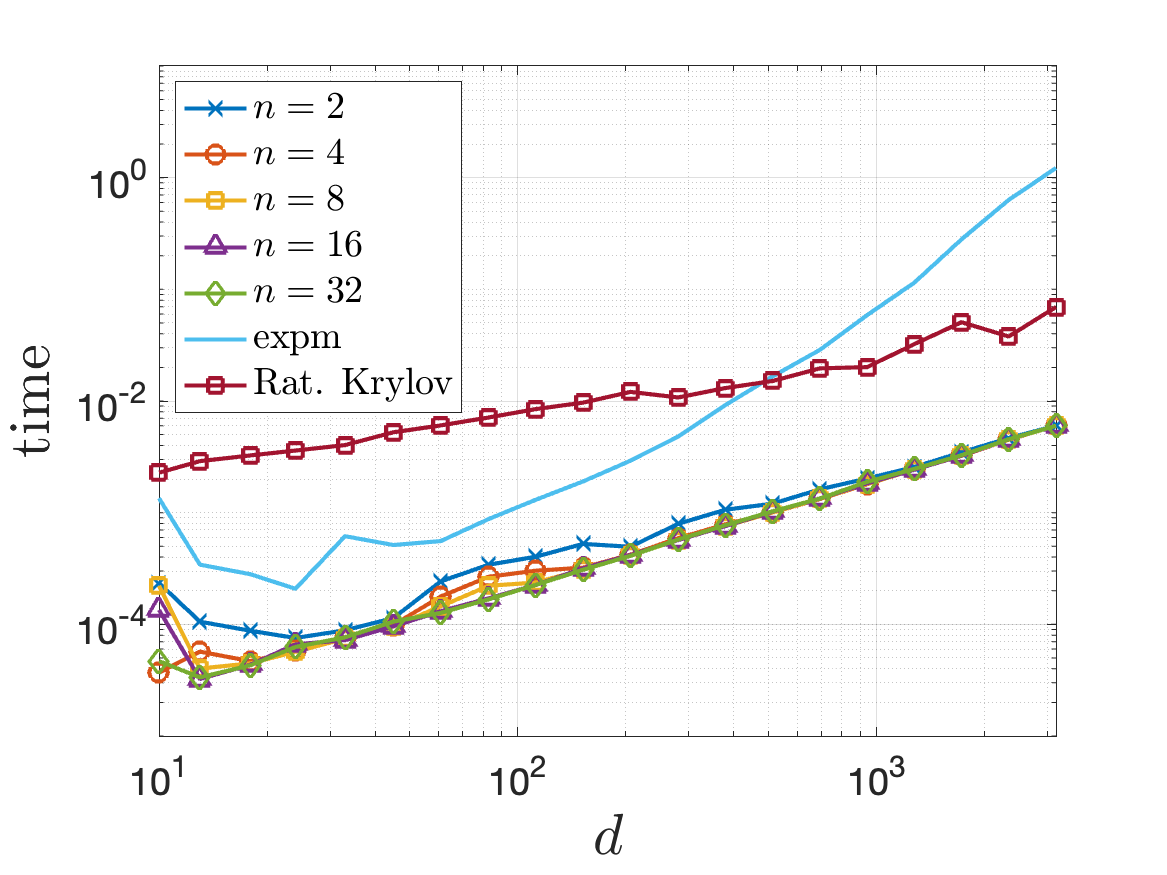}
\includegraphics[width=6cm]{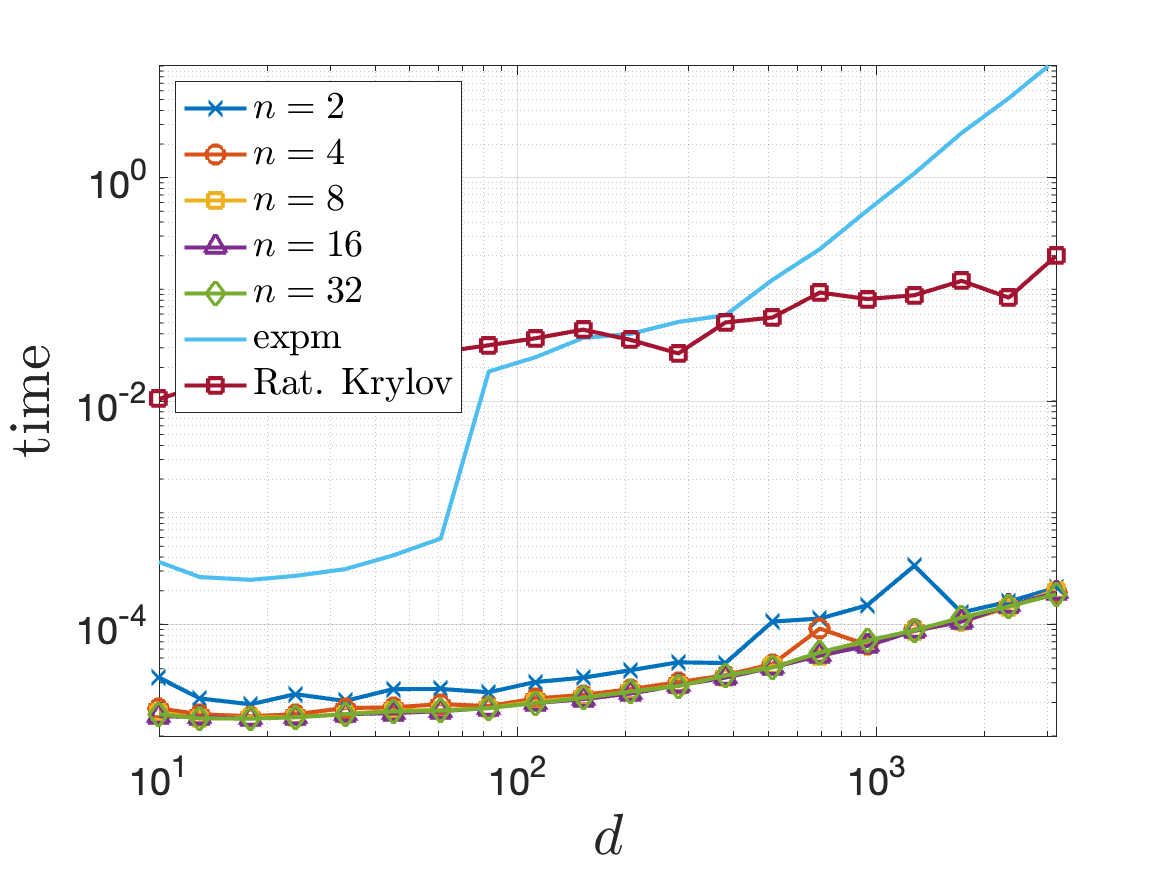}
\includegraphics[width=6cm]{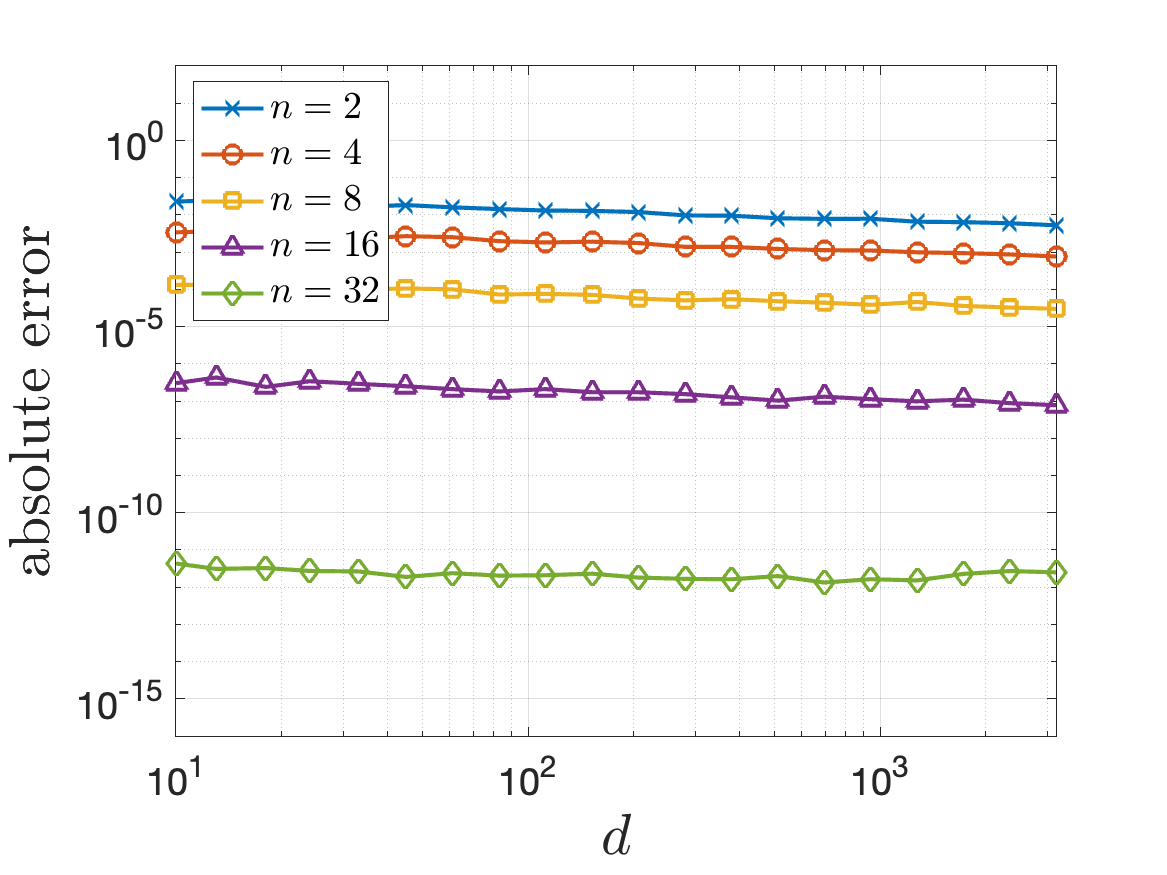}
\end{center}
\caption{Top: CPU time required to compute  $Av = \exp(\Delta^{1}_d)v$ for vectors $v$ randomly chosen using  rational Krylov method  and  our method for various values of $n$. Computations are done with  MATLAB
 (left) and Octave (right).  
Bottom: relative error, as a function of the dimension of  the matrix. 
}
\label{fig:matrixexpAb_1D}
\end{figure}

We make the same analysis for the absolute error $\|\texttt{expm}(C)v - \mathcal{R}_n(C)v\|_2$ together with the computing times  $t_{seq}$, $t_{Krylov}$ and $t_{para}$ are shown in 
 Figure~\ref{fig:matrixexpAb_2D}.
We note again that $t_{seq}$ and $t_{Krylov}$ are larger than $t_{para}$ for all values of the dimension $d$ of the matrix, with MATLAB as well as Octave. For $d = 10^3$, we observe that $t_{seq} \approx 10 t_{para}$ and $t_{seq} \approx 10^2 t_{para}$ with MATLAB and Octave, respectively, whereas $t_{para} \lessapprox t_{Krylov}$ 
and 
$t_{Krylov} \approx 10 t_{para}$.

\begin{figure}[ht]
\begin{center}
\includegraphics[width=6cm]{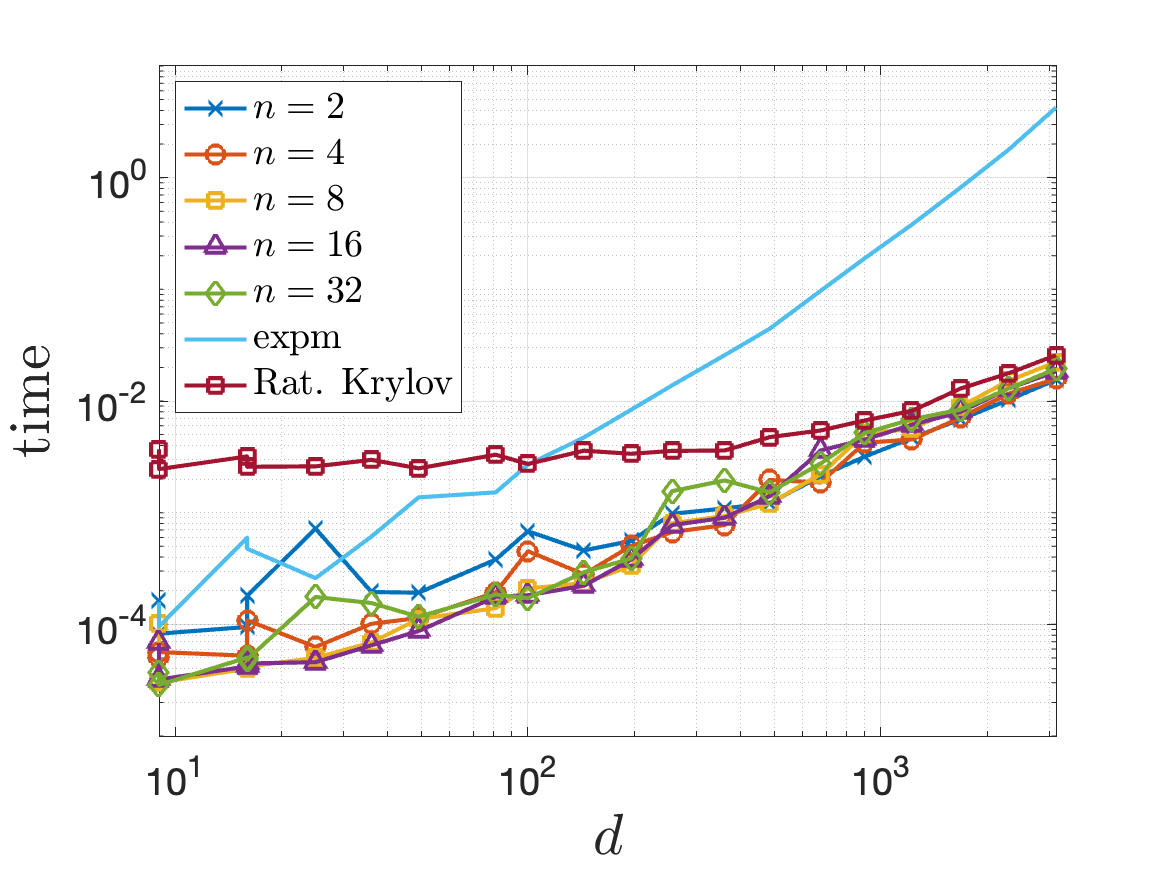}
\includegraphics[width=6cm]{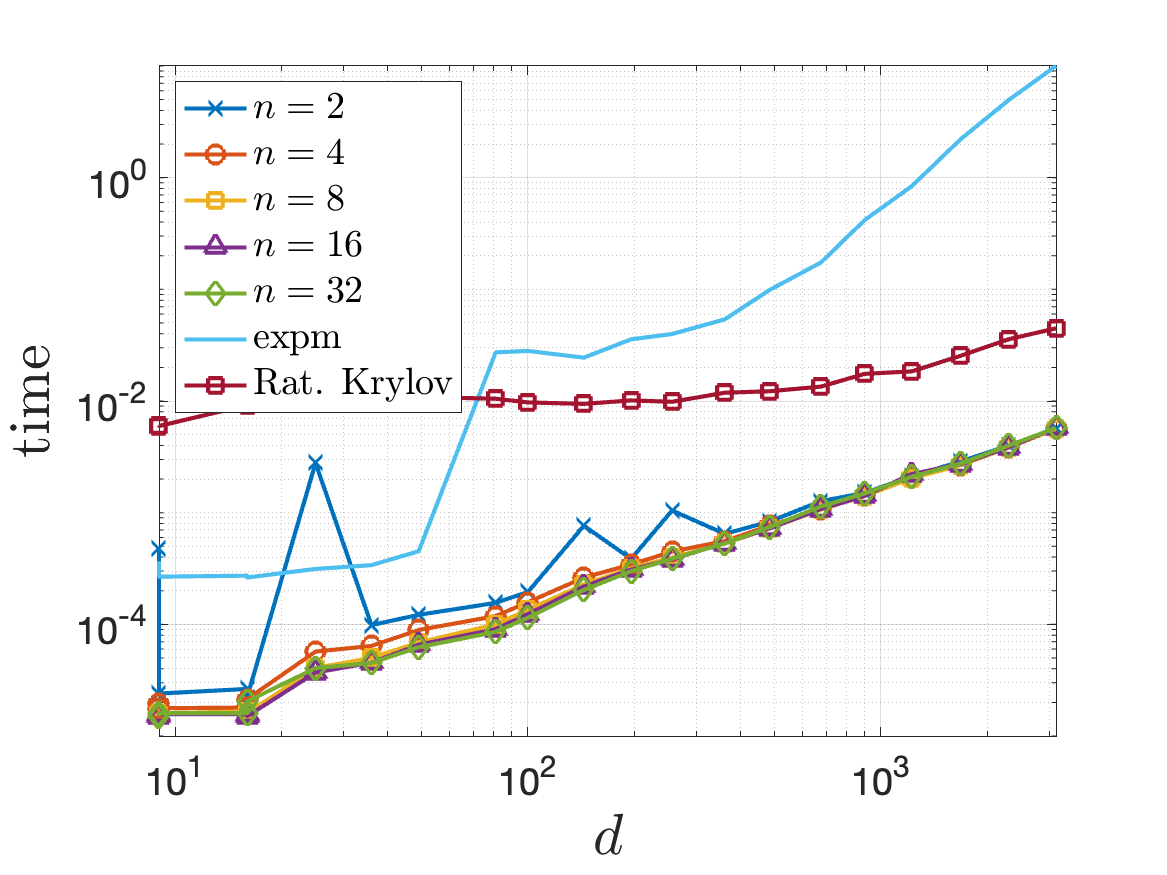}
\includegraphics[width=6cm]{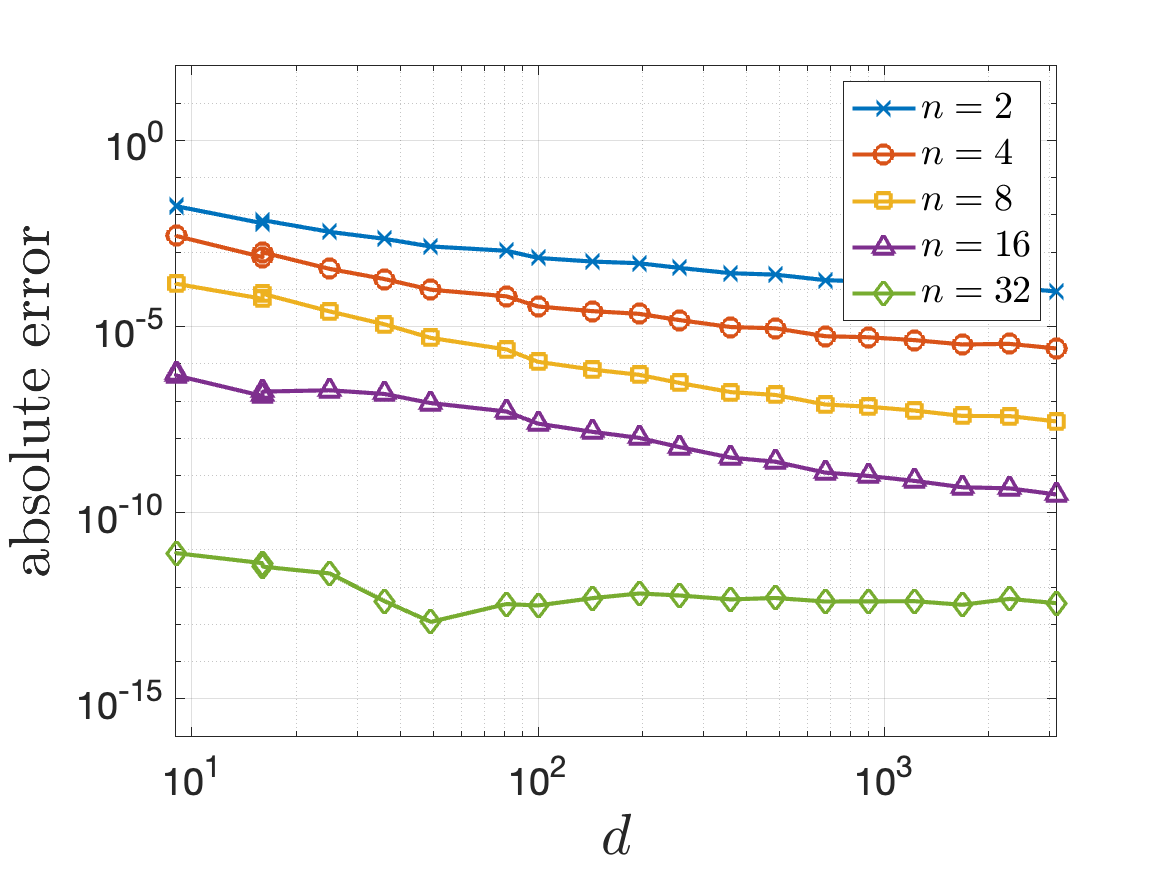}
\end{center}
\caption{Same as Figure~\ref{fig:matrixexpAb_1D} for matrix $B = \Delta^{2}_d$. 
}
\label{fig:matrixexpAb_2D}
\end{figure}


Finally,  we point out that our method defines a way to approximate the exponential of a matrix, whereas Krylov's rational methods approximate the matrix-vector product. 
These methods  are based on a  reduction of dimensionality. For the problems considered in this article they lead in practice to the computation of the exponential of a smaller matrix. 
Hence rational Krylov can be combined with our method for the computation of the exponential of this smaller matrix.



\bibliographystyle{abbrv}
\bibliography{HKPS}

\begin{thebibliography}{10}

\bibitem{high}
A.~H. Al-Mohy and N.~J. Higham.
\newblock Computing the action of the matrix exponential, with an application
  to exponential integrators.
\newblock {\em SIAM Journal on Scientific Computing}, 33(2):488--511, 2011.

\bibitem{baker}
G.~Baker, P.~Graves-Morris, and S.~Baker.
\newblock {\em Pad{\'e} Approximants}.
\newblock Encyclopedia of Mathematics and its Applications. Cambridge
  University Press, 1996.

\bibitem{Beckermann}
B.~Beckermann and L.~Reichel.
\newblock Error estimates and evaluation of matrix functions via the faber
  transform.
\newblock {\em SIAM Journal on Numerical Analysis}, 47(5):3849--3883, 2009.

\bibitem{MR1751993}
L.~Bergamaschi and M.~Vianello.
\newblock Efficient computation of the exponential operator for large, sparse,
  symmetric matrices.
\newblock {\em Numer. Linear Algebra Appl.}, 7(1):27--45, 2000.

\bibitem{braess}
D.~Braess.
\newblock On the conjecture of meinardus on rational approximation of
  $\exp(x)$.
\newblock {\em J. Approx. Theory}, 36(4):317--320, 1982.

\bibitem{varga}
W.~Cody, G.~Meinardus, and R.~Varga.
\newblock Chebyshev rational approximations to $e^{-x}$ in $[0,+\infty)$ and
  applications to heat-conduction problems.
\newblock {\em Journal of Approximation Theory}, 2(1):50--65, 1969.

\bibitem{Diele}
F.~Diele, I.~Moret, and S.~Ragni.
\newblock Error estimates for polynomial krylov approximations to matrix
  functions.
\newblock {\em SIAM Journal on Matrix Analysis and Applications},
  30(4):1546--1565, 2009.

\bibitem{Druskin}
V.~Druskin, L.~Knizhnerman, and M.~Zaslavsky.
\newblock Solution of large scale evolutionary problems using rational krylov
  subspaces with optimized shifts.
\newblock {\em SIAM Journal on Scientific Computing}, 31(5):3760--3780, 2009.

\bibitem{Octave}
J.~W. Eaton, D.~Bateman, S.~Hauberg, and R.~Wehbring.
\newblock {\em {GNU Octave} version 5.2.0 manual: a high-level interactive
  language for numerical computations}, 2020.

\bibitem{simoncini}
A.~Frommer and V.~Simoncini.
\newblock Stopping criteria for rational matrix functions of hermitian and
  symmetric matrices.
\newblock {\em SIAM Journal on Scientific Computing}, 30(3):1387--1412, 2008.

\bibitem{saadP}
E.~Gallopoulos and Y.~Saad.
\newblock Efficient parallel solution of parabolic equations: Implicit methods
  on the cedar multicluster.
\newblock In J.~Dongarra, P.~Messina, D.~C. Sorensen, and R.~G. Voigt, editors,
  {\em Proc. of the Fourth SIAM Conf. Parallel Processing for Scientific
  Computing}, pages 251--256. SIAM, 1989.

\bibitem{saad}
E.~Gallopoulos and Y.~Saad.
\newblock Efficient solution of parabolic equations by {K}rylov approximation
  methods.
\newblock {\em SIAM J. Sci. Statist. Comput.}, 13(5):1236--1264, 1992.

\bibitem{Grimm}
T.~G\"{o}ckler and V.~Grimm.
\newblock Uniform approximation of $varphi$-functions in exponential
  integrators by a rational krylov subspace method with simple poles.
\newblock {\em SIAM Journal on Matrix Analysis and Applications},
  35(4):1467--1489, 2014.

\bibitem{guttel}
S.~G\"{u}ttel.
\newblock Rational {K}rylov approximation of matrix functions: numerical
  methods and optimal pole selection.
\newblock {\em GAMM-Mitt.}, 36(1):8--31, 2013.

\bibitem{guettel_2020_toolbox}
S.~Güttel.
\newblock Rktoolbox guide, Jul 2020.

\bibitem{higham_book}
N.~J. Higham.
\newblock {\em Functions of Matrices: {Theory} and Computation}.
\newblock Society for Industrial and Applied Mathematics, Philadelphia, PA,
  USA, 2008.

\bibitem{highS}
N.~J. Higham.
\newblock The scaling and squaring method for the matrix exponential revisited.
\newblock {\em SIAM Rev.}, 51(4):747--764, 2009.

\bibitem{Knizhnerman}
L.~Knizhnerman and V.~Simoncini.
\newblock A new investigation of the extended krylov subspace method for matrix
  function evaluations.
\newblock {\em Numerical Linear Algebra with Applications}, 17(4):615--638,
  2010.

\bibitem{Lopez}
L.~Lopez and V.~Simoncini.
\newblock Analysis of projection methods for rational function approximation to
  the matrix exponential.
\newblock {\em SIAM Journal on Numerical Analysis}, 44(2):613--635, 2006.

\bibitem{lu}
Y.~Y. Lu.
\newblock Exponentials of symmetric matrices through tridiagonal reductions.
\newblock {\em Linear Algebra and its Applications}, 279(1):317--324, 1998.

\bibitem{MATLAB:R2021b}
The Mathworks, Inc., Natick, Massachusetts.
\newblock {\em {MATLAB version 9.11.0.1769968 (R2021b)}}, 2021.

\bibitem{meinar}
G.~Meinardus.
\newblock {\em Approximation of Functions: Theory and Numerical Methods}.
\newblock Springer tracts in natural philosophy. Springer, 1967.

\bibitem{comp}
C.~Moler and C.~Van~Loan.
\newblock Nineteen dubious ways to compute the exponential of a matrix,
  twenty-five years later.
\newblock {\em SIAM Review}, 45(1):3--49, 2003.

\bibitem{saff-varga}
E.~B. Saff and R.~S. Varga.
\newblock Zero-free parabolic regions for sequences of polynomials.
\newblock {\em SIAM Journal on Mathematical Analysis}, 7(3):344--357, 1976.

\bibitem{scho}
A.~Sch{\"o}nhage.
\newblock Zur rationalen approximierbarkeit von $e^{-x}$ {\"u}ber
  $[0,+\infty)$.
\newblock {\em Journal of Approximation Theory}, 7(4):395--398, 1973.

\bibitem{MR2777241}
B.~N. Sheehan, Y.~Saad, and R.~B. Sidje.
\newblock Computing {$\exp(-\tau A)b$} with {L}aguerre polynomials.
\newblock {\em Electron. Trans. Numer. Anal.}, 37:147--165, 2010.

\bibitem{szego}
G.~Szeg{\"o}.
\newblock {\"U}ber einige eigenschaften der exponentialreihe.
\newblock {\em Sitzungsber. Berl. Math. Ges}, 23:50--64, 1924.

\bibitem{tre}
L.~N. Trefethen.
\newblock The asymptotic accuracy of rational best approximations to $e^z$ on a
  disk.
\newblock {\em Journal of Approximation Theory}, 40(4):380--383, 1984.

\bibitem{ward}
R.~C. Ward.
\newblock Numerical computation of the matrix exponential with accuracy
  estimate.
\newblock {\em SIAM Journal on Numerical Analysis}, 14(4):600--610, 1977.

\bibitem{marche}
S.~M. Zemyan.
\newblock On the zeroes of the $n-th$ partial sum of the exponential series.
\newblock {\em The American Mathematical Monthly}, 112(10):891--909, 2005.

\end{thebibliography}

\end{document}